\newtheorem{theorem}{Theorem}
\newcounter{Lemma}
\newtheorem{lemma}[Lemma]{Lemma}
\newenvironment{definition}[1][Definition]{\begin{trivlist}
\item[\hskip \labelsep {\bfseries #1}]}{\end{trivlist}}
\begin{document}

\title{A Lower Bound on the Capacity of Wireless Erasure Networks with Random Node Locations}
\author{Rayyan G. Jaber and Jeffrey G. Andrews\footnote{R. Jaber and J. Andrews are with the Wireless Networking and Communications Group (WNCG) of the Electrical and Computer Engineering
Department, The University of Texas at Austin, Austin, TX, 78712-0240 USA (email:
\{jaber, jandrews\}@ece.utexas.edu). The contact author is J. Andrews.  This research was
supported by NSF grant nos. 0634979 and 0643508 and the DARPA IT-MANET program,
grant no. W911NF-07-1-0028.  Manuscript date: \today.}} \maketitle

\begin{abstract}
In this paper, a lower bound on the capacity of wireless ad hoc erasure networks is derived in closed form in the canonical case where $n$ nodes are uniformly and independently distributed in the unit area square. The bound holds almost surely and is asymptotically tight. We assume all nodes have fixed transmit power and hence two nodes should be within a specified distance $r_n$ of each other
to overcome noise. In this context, interference determines outages, so we model each transmitter-receiver pair as an erasure channel with a broadcast constraint, i.e. each node can transmit only one signal across all its outgoing links.
A lower bound of $\Theta(n r_n)$ for the capacity of this class of networks is derived. If the broadcast constraint is relaxed and each node can send distinct signals on distinct outgoing links, we show that the gain is a function of $r_n$ and the link erasure probabilities, and is at most a constant if the link erasure probabilities grow sufficiently large with $n$.  Finally, the case where the erasure probabilities are themselves random variables, for example due to randomness in geometry or channels, is analyzed. We prove somewhat surprisingly that in this setting, variability in erasure probabilities increases network capacity.
\end{abstract}

\section{Introduction}

Determining the capacity regions of multiuser wireless networks is an open problem in
general~\cite{And07}. Previous work develops approximations to and descriptions of the
network capacity in different settings, with several different approaches \cite{GupKum00,
TouGol03, OzgLevTse07, ElGMam04, DanGow06}.  Gupta and Kumar~\cite{GupKum00} began the popular trend of
characterizing the fundamental limits on the throughput of such networks with scaling
laws. In particular, they prove that a sublinear sum rate throughput of
$\Theta(~\sqrt{\frac{n}{\log{n}}})$ is achievable, where $n$ is the number of
nodes in the network, if the nodes are uniformly and independently distributed in a unit
area, each transmitting to a randomly chosen destination~\cite{GupKum00}.  Scaling laws
have been further developed in a number of scenarios
\cite{XieKum04,XueXie05,LevTel05,JovVis04} and the $\log n$ factor for the lower bound
was proven to be superfluous~\cite{FraDou07}.   Recently, Ozgur~\emph{et~al.} argued
that a linear scaling $O(n^{1-\epsilon})$ may be approached in the case of hierarchal
cooperation ~\cite{OzgLevTse07}, but this increases delay and in any case may not
change the underlying scaling in real channels \cite{Fra07}.  Similarly, with randomized
mobility and unbounded delay, a ``postman" model of packet delivery can be employed to
get linear, i.e. $O(n)$, scaling \cite{GroTse02}, which has led to studies on
throughput-delay trade-offs ~\cite{ElGMam04}.  A common feature of all this work is that
the preconstants to the scaling laws are not computable, which has rendered the
quantitative results generated from these approaches to be coarse.  In many cases, this
has impaired qualitative improvement in the design of distributed wireless networking
protocols.  Some new approaches seem necessary to quantifying the network capacity.  The
goal of this paper is to advance such an approach, showing how straightforward tools from
random geometric graph theory can be used to replicate the aforementioned scaling laws,
while providing further precision on the preconstants.

\subsection{Erasure Networks}
Erasure networks characterize transmission links in a wireless ad hoc network by
assigning an erasure probability to each potential connection between nodes in the
network~\cite{Thom}. From a practical perspective, erasure events correspond to packet drops or
temporary outages and are a reasonable metric for characterizing a channel with a certain
bit rate.  Dana~\emph{et~al.} recently derived elegant cut-set bounds to characterize the
capacity of wireless ad hoc erasure networks under a set of reasonable
assumptions~\cite{DanGow06}. Their result, however, is independent of the network
topology and geometry of the node locations, which are the most important effect in
determining the erasure probabilities and traffic flows in the network. Instead, the
capacity was cast as an optimization program that involves minimizing a (nonlinear) cut-set expression over a set whose size is exponential in the number of nodes $n$. Beside
the inherent difficulties in evaluating an exponentially large number of cut-sets even in
moderate sized networks ($n=50$ is computationally very intensive), this result does not
reveal how the network capacity depends on parameters such as number of nodes, the
erasure probabilities and transmission range.

The present paper thus aims to establish a model for wireless erasure ad hoc networks
that captures node topology, physical layer parameters, and develops tight bounds in
closed form for the end-to-end throughput.  We place $n$ nodes uniformly and
independently in the unit square $[0,1]^2$, each of which can communicate with other
nodes within distance $r_n$ through wireless broadcast erasure links with constant
erasure probability $\gamma_n$. We first consider the single source single destination and then generalize it to multiple sources multiple destinations as follows. We assume the set of intended transmitters form a linear fraction of the nodes and so do the intended receivers. We also assume that each of these sets can cooperate among each other. The remaining nodes can relay messages within their transmission range $r_n$. Finally, the failure events of transmissions across distinct links are assumed independent and happen with probability $\gamma_n$.

\subsection{Main Results}
The main result of the paper is given in Theorem~\ref{thm:mainthm}, which provides
a closed-form lower bound on the capacity of a wireless erasure network.  This provides a
scaling law on the network sum capacity of at least $\Theta(n r_n)$ for an
\emph{arbitrary} set of transmitters to an \emph{arbitrary} set of receivers, each of which contains a linear fraction of the number of nodes and is
allowed to cooperate their transmissions and receptions.  We further show in Section~\ref{sec:tight} that the bound
is tight, in the sense that there exists a particular choice of source nodes and
destination nodes for which the sum rate capacity is within a (small) constant factor
from the proven lower bound.  Thus, for the critical connectivity radius of $r_{\rm c} =
\Theta\left(\sqrt{\frac{\log n}{n}}\right)$, our lower bound scales as
$\Theta(\sqrt{n\log{n}})$, consistent with \cite{GupKum00} up to a $\log{n}$ factor although with a quite
different network model.

If the broadcast constraint is relaxed to allow transmitting nodes to send distinct
messages across their outgoing links (multicast), we prove in Theorem~\ref{thm:relaxingbroadcast} that the network capacity
increases to $\Theta(n^2 r_n^3)$ if the erasure probabilities are constant with $n$. This
gain evaluates to a factor of $\log{n}$ for $r=r_{\rm c}$. However, as the erasure
probabilities increase with the number of nodes $n$, say due to increased interference, then the gain
due to multicast starts to decrease. At the critical case, where the erasure probabilities scale as of $1-1/\Theta(\log{n})$ the gain due to multicast is at most a constant.
  Finally, we prove that if the erasure
probabilities are not constant even for fixed $n$, but are random variables instead as
would be the case in a network with fading and random node locations, then this
variability in erasure probabilities actually increases the network capacity as proven in Lemma~\ref{lemma:randomearsures}. The
intuition behind this initially surprising result is that only one successful
(non-erased) transmission is needed to traverse a cut, so variability provides
statistical diversity that improves the chances of at least one successful transmission.

\subsection{Organization}
Preliminaries on notations, especially with respect to random geometric graphs are given in Section~\ref{Notation}. The modeling assumptions are stated in Section~\ref{assumptions} and the capacity cut set bound identified in Section~\ref{capCutSet} is used as a suitable metric in the setting of ad hoc wireless erasure networks. Section~\ref{sec:randomGeometricGraphsAndGrids} draws the analogy between random geometric graphs and the deterministic grid. Section~\ref{relatingtoadhoc} then establishes the desired lower bound of $\Theta(nr_n)$ and proves it tight. Section~\ref{sec:multicast} proves that a gain of $nr_n^2$ is achieved by relaxing the broadcast constraint. It also proves variability in erasure probabilities increases network capacity. Section~\ref{sec:conclusion} concludes the paper.

\section{Preliminaries}
\label{Notation}
\subsection{Notation}
Throughput this paper, sets are denoted by calligraphic alphabet (e.g. $\mathcal{A}, \mathcal{B}, \mathcal{C}$), $|\mathcal{X}|$ and $\mathcal{X}^c$ denote the cardinality and the complement of set $\mathcal{X}$ respectively. The logarithm $\log{x}$ denotes the natural logarithm of a positive real number $x$. Some parameters of the network model will depend on the number of nodes $n$ in the network. These parameters are subscripted $n$. For example, $r_n$ denotes the transmission radius of a node when the network has $n$ nodes. The subscript $n$ might be dropped when it is implied in context. When the number of nodes in the network is implied, a subscript might be used to denote a sequence of nodes or links. For example, in a network of $n$ nodes, $v_1, v_2,\ldots\, v_k$ denotes a sequence of $k$ nodes.

\subsection{Definitions for Random Geometric Graphs}
For two points $x,y \in \mathbb{R}^2$, the distance between $x$ and $y$ is $||x-y||_{\infty} = \max\left\{x_1-y_1, x_2-y_2\right\}$ where $(x_1, x_2)$ and $(y_1,y_2)$ are the coordinates of $x$ and $y$ respectively. This measure of distance simplifies the analysis of the lower bound presented in section~\ref{relatingtoadhoc}. Similar results to those proven in this paper follow if the Euclidian distance $||x-y||_2 = \sqrt{(x_1-y_1)^2+(x_2-y_2)^2}$ is used. Given two real valued functions $f(n)$ and $g(n)$, $f(n)=\Theta(g(n))$ if and only if there exists positive constants $c_1, c_2$ and $n_0$ such that $c_1 g(n) \leq f(n) \leq c_2 g(n)$ for all $n\geq n_0$.\\
Let $\mathcal{V}_n$ be a Bernoulli point process consisting of $n$ points (or nodes) independently and identically distributed in the unit square $[0,1]^2$ and let $r_n$ be a positive real number. For every integer $n \geq 1$, we can construct the graph $\mathcal{G}_n=\mathcal{G}(\mathcal{V}_n,r_n)$ as the graph on $n$ vertices, associated with $\mathcal{V}_n$ and the set of directed edges $\mathcal{E}_n\subset \mathcal{V}_n \times \mathcal{V}_n$ is characterized as:
 $$\mathcal{E}_n= \left\{(u,v) | u,v \in \mathcal{V}_n \textrm{ and } ||u-v||_\infty \leq r_n\right\}$$
The graph $\mathcal{G}_n$ is said to be a \emph{random geometric graph} and it can be completely parameterized by $n$ and $r_n$, where $r_n$ is called the transmission radius of the nodes~\cite{Pen03}. Fig.~\ref{fig:network} shows an example of a random geometric graph with 50 nodes and transmission radius of 0.2.
\begin{figure}
\centering
\centerline{\epsfig{figure=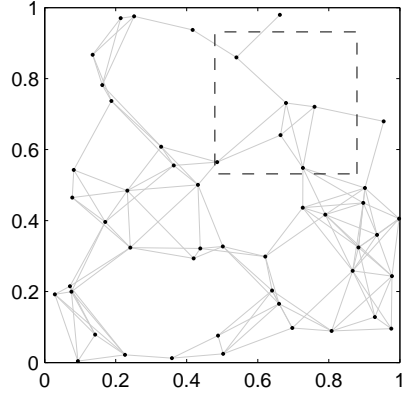,scale=.75}}
\caption{A random geometric graph with $n=50$ nodes and transmission radius $r=0.2$. The dotted square represents the transmission range of the node at its center using the $L\infty$ norm for distances.}
\label{fig:network}
\end{figure}
The following graph theoretic definitions are defined for every $n$, and hence the $n$ subscript will be dropped for convenience~\cite{Wes96}. For nodes $u,v \in \mathcal{V}$, $u$ is said to be \emph{connected} to $v$ if and only if $(u,v)\in\mathcal{E}$. For each node $v \in \mathcal{V}$, $\mathcal{N}_O(v)$ is the set of edges leaving from $v$. Formally
\begin{eqnarray*}
\mathcal{N}_O(v) &=& \left\{ (v,u)|(v,u)\in\mathcal{E} \right\}.
\end{eqnarray*}
Given two disjoint subsets $\mathcal{S}, \mathcal{D} \subset \mathcal{V}$, an $\mathcal{S}-\mathcal{D}$ \emph{cut} is a partition of $\mathcal{V}$ into subsets $\mathcal{V}_\mathcal{S}$ and $\mathcal{V}_\mathcal{D}= \mathcal{V}_\mathcal{S}^c$ such that $\mathcal{S}\subseteq \mathcal{V}_\mathcal{S}$ and $\mathcal{D}\subseteq\mathcal{V}_\mathcal{S}^c$. The $\mathcal{S}$-set $\mathcal{V}_\mathcal{S}$ (or $\mathcal{D}$-set $\mathcal{V}_\mathcal{D}$) determines the cut uniquely. For the $\mathcal{S}-\mathcal{D}$ cut given by $\mathcal{V}_\mathcal{S}$, the \emph{cut-set} $\left[ \mathcal{V}_\mathcal{S}, \mathcal{V}_\mathcal{D} \right]$ is the set of edges going from the $\mathcal{S}$-set to $\mathcal{D}$-set, i.e.,
$$\left[ \mathcal{V}_\mathcal{S}, \mathcal{V}_\mathcal{D} \right] = \left\{ (u,v) | (u,v) \in \mathcal{E}, u \in \mathcal{V}_\mathcal{S}, v \in \mathcal{V}_\mathcal{D} \right\}$$
We also define $\mathcal{V}_\mathcal{S}^*$ as the set of nodes in the $\mathcal{V}_\mathcal{S}$-set that has at least one of its outgoing edges in the cut-set. That is $$\mathcal{V}_\mathcal{S}^* = \left\{ v | \exists u~\textrm{s.t.}~(v,u)\in \left[\mathcal{V}_\mathcal{S},\mathcal{V}_\mathcal{D}\right]\right\}$$

In a given graph, a \emph{path} from node $u_1$ to node $u_k$ is a sequence of edges in $\mathcal{E}$: $(u_1,u_2), (u_2,u_3),\ldots, (u_{k-1},u_{k})$. We refer to a path by its corresponding sequence of nodes $u_{1}u_{2}\ldots u_{k}$. There might be multiple paths from node $u$ to node $v$. If there exists at least one path from every node $u$ to every other node $v$, the graph $\mathcal{G}$ is said to be \emph{connected}. Otherwise, it is said to be \emph{disconnected}. The graph in Fig.~\ref{fig:network} is connected.

\begin{figure}
\centering
\centerline{\epsfig{figure=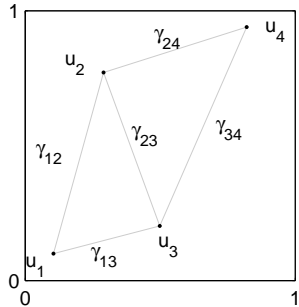,scale=.75}}
\caption{A simple network with 4 nodes. The erasure probabilities are denoted on the edges and are assumed to be symmetric, that is $\gamma_{ij} = \gamma_{ji}$ for simplicity.}
\label{fig:simple_network}
\end{figure}

Fig.~\ref{fig:simple_network} is an instance of a random geometric network with 4 nodes and transmission radius of $0.75$. The edges are labeled with the erasure probabilities of their corresponding links. Erasure probabilities are assumed symmetric for simplicity, that is: $\gamma_{ij}=\gamma_{ji}$. If $u_1$ is the source node and $u_4$ is the destination node, then there are 4 possible cuts depending on which side of the cut the nodes $u_2$ and $u_3$ are placed. Table~\ref{tbl:cutsets} lists these 4 possible cuts. The function $C(\mathcal{V}_\mathcal{S})$ that appears in the last column of the table is defined later, in equation~\ref{eqn:capacity}. As we shall see in section~\ref{capCutSet}, $C(\mathcal{V}_\mathcal{S})$ is an upper bound to how much information can flow across the cut $\mathcal{V}_\mathcal{S}$.

\begin{table}
  \centering
  \caption{Possible Cut Sets}\label{tbl:cutsets}
  \begin{tabular}{|l|l|l|l|l|}
    \hline
    % after \\: \hline or \cline{col1-col2} \cline{col3-col4} ...
    $\mathcal{V}_{\mathcal{S}}$ & $\mathcal{V}_{\mathcal{S}}^*$ & $[\mathcal{V}_{\mathcal{S}},\mathcal{V}_{\mathcal{D}}]$ & $C(\mathcal{V}_{\mathcal{S}})$ \\
    \hline
    $\{u_1\}$           & $\{u_1\}$ & $\{(u_1,u_2),(u_2,u_3)\}$ & $1-\gamma_{12}\gamma_{23}$  \\
    $\{u_1, u_2\}$      & $\{u_1,u_2\}$  & $\{(u_1,u_3),(u_2,u_3),(u_2,u_4)\}$ & $(1-\gamma_{13})+(1-\gamma_{23}\gamma_{24})$ \\
    $\{u_1, u_3\}$      & $\{u_1,u_3\}$  & $\{(u_1,u_2),(u_3,u_2),(u_3,u_4)\}$ & $(1-\gamma_{12})+(1-\gamma_{32}\gamma_{34})$ \\
    $\{u_1, u_2, u_3\}$ & $\{u_2,u_3\}$ & $\{(u_2,u_4),(u_3,u_4)\}$ & $(1-\gamma_{24})+(1-\gamma_{34})$  \\
    \hline
  \end{tabular}
\end{table}

\section{Modeling Assumptions}
\label{assumptions}
This section specifies a reasonable model for a wireless network that is simple and tractable yet resembles the actual physical system. It also scrutinizes the underlying assumptions and questions their validity.

\subsection{Nodes and Links}
We consider the case of $n$ nodes $\mathcal{V}_n$ independently and uniformly distributed in the unit square $[0,1]^2$ forming the binomial point process $\mathcal{V}_n$. This distribution is equivalent to conditioning a stationary Poisson point process on having exactly $n$ points in the unit square~\cite{Kin93}. Previous work has shown that stochastic geometry based on Poisson point processes can capture key features of wireless networks~\cite{BacKle97,Hae06,VenHae06,StoKen96,BacPage,BacZuy97,Bac06,ChaHan01}. Results about random geometric graphs where the nodes are uniformly distributed often yield similar results when the nodes are distributed according to a Poisson point process, by what is referred to as ``Poissonization''~\cite{Pen03}. In this paper, we focus on the case of the binomial point process because this setting has been a canonical example in modeling node locations~\cite{GupKum00, OzgLevTse07}. This will hence allow us to compare the main results of this paper with previous results.\\

Let $\mathcal{S}_n, \mathcal{D}_n \subset \mathcal{V}_n$ be two arbitrary but disjoint subsets of $\mathcal{V}_n$, denoting the sets of intended transmitters and receivers respectively. We also assume that $|\mathcal{S}_n|=\alpha_1 n$ and $|\mathcal{D}_n|=\alpha_2 n$ for some positive real constants $\alpha_1$ and $\alpha_2$. We assume all nodes have fixed transmit power (for a fixed $n$). Since the transmit power is finite, and because of the decay of power with distance ($d^{-\alpha}$ in path loss models), two nodes should be close enough to each other for the signal to noise and interference ratio at the receiver to exceed the minimum threshold needed for successful transmission. The signal to interference and noise ratio is assumed to be negligible at distances farther than $r_n$ from the transmitter. It is hence natural to consider the random geometric graph $\mathcal{G}_n = \mathcal{G}(\mathcal{V}_n, r_n)$.
In fact, random geometric graphs have been extensively used as a model of large
wireless networks \cite{Jia04, Dia07}.

Of course, we do not expect all transmissions to be successful between connected nodes. Indeed, due to fading, noise and possibly interference, the links are not perfect links and are modeled as erasure channels. For every link $(u,v)\in$ in the set of edges $\mathcal{E}_n$, denote by $\gamma_{uv}$ its erasure probability.
%The above construction of the network and its corresponding links induces a random geometric graph where the vertices of the graph are the nodes and weighted edges are the links connecting nodes that are within transmission distance $r_n$ of each other other. The weights denote the erasure probabilities of the corresponding channels. Random geometric graphs capture the location of the nodes and hence are better models for problems where connectivity is related to the location of the nodes.
%Since the transmitter signal power decays with distance $d$, as $d^{-\alpha}$ in path loss models where $\alpha$ is the path loss exponent, it is much easier for points close to each other to communicate than points that are far away from each other. Moreover, the wireless nature of transmissions induces an ``interference field'' that is location dependent. These events can be nicely captured in geometric graphs.

\subsection{Network}
We adopt a similar network model as that described in~\cite{DanGow06}. We assume that the nodes form a wireless erasure multi-hop ad hoc network, so that the network comprises the following salient features:\\
{\bf Wireless:}
Each node can only broadcast its message to all its neighboring nodes whenever it chooses to transmit. Section~\ref{relatingtoadhoc} investigates relaxing this constraint and analyzes the associated gain in throughput.\\
{\bf Erasure:}
A transmission on link $(u,v) \in \mathcal{E}_n$ can fail with probability $\gamma_{uv}$ for some $0\leq \gamma_{uv}\leq 1$. Currently, we assume that erasures across distinct links are statistically independent for tractable analysis. This is an idealized assumption due to interference: if a transmission for a certain receiver failed, then it is more likely that other transmissions to neighboring receivers have failed too. Moreover, it is also assumed, as in~\cite{DanGow06}, that messages received at a node from different incoming links do not interfere. This slightly contradicts the wireless assumption unless the network utilizes an appropriate interference avoidance mechanism.\\
{\bf Multi-hop:}
Transmissions are multi-hop so any node can relay packets from one node to another neighboring node.\\
{\bf Ad hoc:}
The network is fully distributed and does not utilize a preexisting infrastructure or central base stations. The set of source nodes and destination nodes however are assumed to be capable of cooperating in their transmissions and receptions respectively.\\
{\bf Cooperative Network:}
Since we are after the \emph{capacity} of such a network, the \emph{maximum} achievable rate from a transmitter to a receiver, we are inherently assuming that the nodes may cooperate to ensure this high rate~\cite{Thom}. As assumed in~\cite{DanGow06}, error locations on each link are available to the destination as side information. This slightly contradicts the ad hoc assumption since the overhead to achieve this cooperation is likely non-negligible, but accounting for it is postponed to future work.\\

This network topology is analytically tractable. Our work exploits many similarities of this topology with the simple deterministic grid topology to derive bounds on end-to-end throughput.
 %The graph below shows one such bound: an approximation of the capacity as a function of the number of nodes and the %nodes transmission radius.

\section{Capacity of Wireless Erasure Channels}
\label{capCutSet}
Under the assumptions stated above, the capacity of single source, single destination wireless erasure networks is elegantly characterized in~\cite{DanGow06}. It is stated as a cut set bound which has a max-flow/min-cut interpretation which practically identifies the ``bottleneck'' in the network.
In particular, for any source node $s$ and destination node $d$, let $\mathcal{S}=\{s\}$, $\mathcal{D}=\{d\}$, and any $\mathcal{V}_\mathcal{S}$-cut of the nodes, the capacity of the network is upper bounded by~\cite{DanGow06}:
\begin{equation}
\label{eqn:capacity}
C(\mathcal{V}_\mathcal{S})=\sum_{u \in \mathcal{V}_\mathcal{S}^*}\left(1-\prod_{v:(u,v)\in[\mathcal{V}_\mathcal{S},\mathcal{V}_\mathcal{D}]}\gamma_{ij}\right)
\end{equation}

And the capacity $C$ of the network is exactly the minimum of the above expression over all possible cut sets $\mathcal{V}_\mathcal{S}$~\cite{DanGow06}:
\begin{equation}
\label{eqn:mincut}
C = \min_{\mathcal{V}_\mathcal{S}: \mathcal{V}_\mathcal{S} \textrm{ an } \mathcal{S}-\mathcal{D} \textrm{ cut}} C(\mathcal{V}_\mathcal{S})
\end{equation}
The expression in~(\ref{eqn:mincut}) is proved to be an achievable upper limit.

The capacity cut set bounds for the network in Fig.~\ref{fig:simple_network} are calculated in Table~\ref{tbl:cutsets}. The capacity of that network is hence given by the expression:
$$C = \min\left\{1-\gamma_{12}\gamma_{23}, 1-\gamma_{13}+1-\gamma_{23}\gamma_{24}, 1-\gamma_{12}+1-\gamma_{32}\gamma_{34}, 1-\gamma_{24}+1-\gamma_{34}\right\}$$

%\begin{figure}
%\centering
%\includegraphics[scale=0.8]{simple_network_details.png}
%\caption{A simple network with 4 nodes and a cut set}
%\label{fig:network}
%\end{figure}

%Observe the multiplicative term $\prod_{j\in N(i)\cap T}\epsilon_{ij}$. Intuitively, this represents the probability that all transmissions from set $S$ to $T$ via node $i$ at a particular time slot fail. The assumption that the transmissions are independent is obviously required here. In fact, if the cut cost was linear and of the form $\sum_{i \in S}\sum_{j \in N(i)\cap T} \gamma_{ij}$, then it can be evaluated easily since it will be an instance of the classic min-cut/max-flow problem; however, it is not obvious how to evaluate the original (nonlinear) cut set bound efficiently. An exhaustive search over all possible cuts will of course return the answer but the runtime is exponential in the number of nodes. Even for relatively small networks of 40 nodes, such naive search is not practical. This motivated us to explore alternative approaches.
Although~(\ref{eqn:mincut}) characterizes the capacity of general networks exactly, it is not obvious to what it evaluates to in practical scenarios, such as the one we consider in this paper, where the nodes are independently and uniformly distributed in space (e.g. in the unit square), each having a fixed transmission radius, with multiple sources and multiple destinations\footnote{More details about the multiple source multiple destination case in Theorem~\ref{thm:mainthm}.}. For the single source single destination case, there are $2^{n-2}$ possible cut sets and evaluating~(\ref{eqn:capacity}) for every one of them is not practical even for relatively small networks. Moreover, the effect of the number of nodes $n$ and transmission radius $r_n$ on the capacity of the network is not clear from~(\ref{eqn:mincut}). Our goal is to identify a lower bound for the capacity that holds almost surely under the assumptions stated in Section~\ref{assumptions} and that highlights the effect of physical layer parameters such as transmission radius $r_n$ and erasure probabilities.

\section{Random Geometric Graphs and Grids}
\label{sec:randomGeometricGraphsAndGrids}
The core of the subsequent analysis is based on random geometric graph theory. The analysis can be divided into three stages:
\begin{enumerate}
\item
An analogy between the random network topology and the deterministic grid is derived.
\item
Relevant combinatorial properties of the grid topology are explored.
\item
These properties are translated back to the probabilistic setting to conclude a lower bound on the capacity of the ad hoc wireless network.
\end{enumerate}

\subsection{Connectivity of Random Ad Hoc Networks and Their Analogy to Grids}
Since the nodes are uniformly and independently distributed, it is intuitive to assume that if the number of nodes is large, then the nodes will be somewhat evenly distributed across the unit square. Indeed, this turns out to be the case as formalized below with the notion of $\epsilon$-niceness.

\begin{definition}
\label{def:epsnice}\emph{$\epsilon$-niceness of a random geometric graph}~\cite{Dia01}.
Consider a random geometric graph $\mathcal{G}_n$ of $n$ nodes $\mathcal{V}_n$ in the unit area square $[0,1]^2$ and connectivity radius $r_n$. Partition the square into $4 \lceil 1 / r_n \rceil^2$ smaller square cells with a side length of $1/(2\lceil 1/r_n \rceil)$. Let $\epsilon \in (0,1)$. The random geometric graph $\mathcal{G}_n$ is said to be $\epsilon$-nice if and only if the number of nodes in each cell is at least $\left(1-\epsilon \right)\frac{1}{4}n r_n^2$ and is at most $\left(1+\epsilon\right)\frac{1}{4}n r_n^2$.
\end{definition}

\begin{theorem}
\label{thm:epsilonnice}
A random geometric graph is $\epsilon$-nice almost surely if $r_n > \frac{6}{\epsilon} \sqrt{\frac{\log{n}}{n}}$ for sufficiently large $n$.
\end{theorem}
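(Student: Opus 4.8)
The plan is to reduce the statement to a standard concentration argument for binomial cell counts, combined with a union bound and the Borel--Cantelli lemma. Fix $\epsilon\in(0,1)$ and assume $r_n>\frac{6}{\epsilon}\sqrt{\frac{\log n}{n}}$. Following Definition~\ref{def:epsnice}, partition $[0,1]^2$ into $M=4\lceil 1/r_n\rceil^2$ congruent cells, each of area $p=\frac{1}{4\lceil 1/r_n\rceil^2}$. Since the nodes of $\mathcal{V}_n$ are i.i.d.\ uniform on the square, the count $N_i$ of nodes in cell $i$ is $\mathrm{Binomial}(n,p)$ with mean $m:=np=\frac{n}{4\lceil 1/r_n\rceil^2}$, while the target value in the definition is $\mu:=\frac14 n r_n^2$. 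Because $\frac1{r_n}\le\lceil 1/r_n\rceil<\frac1{r_n}+1$, one has $m\le\mu$ and $\mu/m=(r_n\lceil 1/r_n\rceil)^2\in[1,(1+r_n)^2)$; as $r_n\to0$ this ratio tends to $1$, so $m=\mu(1-o(1))$, which is what will let me interchange $m$ and $\mu$ in the exponents.

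Next I would apply the Chernoff bounds $\Pr[N_i\ge(1+\delta)m]\le e^{-\delta^2 m/3}$ and $\Pr[N_i\le(1-\delta)m]\le e^{-\delta^2 m/2}$ (valid for $0<\delta\le1$) to the two failure events $\{N_i>(1+\epsilon)\mu\}$ and $\{N_i<(1-\epsilon)\mu\}$. Writing $(1\pm\epsilon)\mu=(1\pm\delta_\pm)m$, the ratio estimate above gives $\delta_+\ge\epsilon$ and $\delta_-=\epsilon(1-o(1))$, both in $(0,1]$ once $n$ is large, so each tail is at most $\exp\!\big(-\epsilon^2 m(1-o(1))/3\big)$. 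The hypothesis on $r_n$ yields $\epsilon^2\mu=\frac{\epsilon^2 n r_n^2}{4}>9\log n$, hence each tail is at most $n^{-3(1-o(1))}$. Union bounding over the $M=4\lceil 1/r_n\rceil^2=O(n/\log n)$ cells, the probability that $\mathcal{G}_n$ fails to be $\epsilon$-nice is at most $2M\,n^{-3(1-o(1))}$, whose exponent is eventually below $-3/2$, giving a summable sequence in $n$.

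Finally, since $\sum_n 2M\,n^{-3(1-o(1))}<\infty$, the Borel--Cantelli lemma shows that, with probability one, only finitely many of the events ``$\mathcal{G}_n$ is not $\epsilon$-nice'' occur. Equivalently, almost surely $\mathcal{G}_n$ is $\epsilon$-nice for all sufficiently large $n$, which is the assertion of the theorem.

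The step I expect to be the main obstacle is the bookkeeping forced by the ceiling function: the true mean is $m=n/(4\lceil 1/r_n\rceil^2)$ rather than the nominal $\mu=nr_n^2/4$, so the deviations $(1\pm\epsilon)\mu$ must be re-expressed relative to $m$, and one must verify that the discrepancy $\mu/m=(r_n\lceil 1/r_n\rceil)^2=1+O(r_n)$ is small enough not to erode the Chernoff exponent below the level required for summability, as well as that $\delta_\pm$ remain in the range $(0,1]$ where the quoted inequalities hold. The constant $6$ in the hypothesis (producing $\epsilon^2\mu>9\log n$, hence a per-cell tail of order $n^{-3}$) is chosen precisely so that, after multiplying by the $\Theta(n)$ cells, the union bound is still summable; a substantially smaller constant would leave a per-cell bound too weak to survive Borel--Cantelli.
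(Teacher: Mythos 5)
Your proposal is correct and takes essentially the same approach as the paper: the paper's proof of Theorem~\ref{thm:epsilonnice} simply defers to Lemma 5.1 of~\cite{Dia01}, which is precisely the Chernoff-bound-on-binomial-cell-counts argument (with a union bound over cells and Borel--Cantelli) that you have written out in full, including the check that the constant $\frac{6}{\epsilon}$ makes the union bound summable. The one caveat, which the paper's statement itself shares, is that your ceiling bookkeeping ($\mu/m=(r_n\lceil 1/r_n\rceil)^2\to 1$) requires $r_n\to 0$, an assumption left implicit in the theorem but imposed elsewhere in the paper (e.g.\ in Theorem~\ref{thm:capcut}).
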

\begin{proof}
A simple proof based on the Chernof bound of binomial random variables is presented in Lemma 5.1 in~\cite{Dia01} when $\lim_{n\rightarrow\infty}\frac{r_n^2}{\log{n}/{n}}=\infty$ but it actually suffices for that proof that $r_n > \frac{6}{\epsilon} \sqrt{\frac{\log{n}}{n}}$ for sufficiently large $n$.
\end{proof}

\begin{lemma}
\label{lemma:connected}
A random geometric graph $\mathcal{G}_n$ is almost surely connected if $r_n > 7 \sqrt{\frac{\log{n}}{n}}$ for sufficiently large $n$.
\end{lemma}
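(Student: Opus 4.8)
The plan is to bootstrap the $\epsilon$-niceness guarantee of Theorem~\ref{thm:epsilonnice} into a connectivity statement by means of a purely deterministic geometric argument. First I would fix $\epsilon = 6/7$ (indeed any $\epsilon \in [6/7,1)$ works). With this choice $\frac{6}{\epsilon}\sqrt{\log n / n} = 7\sqrt{\log n / n} < r_n$, so the hypothesis of Theorem~\ref{thm:epsilonnice} is met and $\mathcal{G}_n$ is $\epsilon$-nice almost surely. It therefore suffices to establish the deterministic implication: every $\epsilon$-nice graph with $r_n > 7\sqrt{\log n / n}$ is connected once $n$ is large.

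The first ingredient is nonemptiness of the cells. By Definition~\ref{def:epsnice}, $\epsilon$-niceness forces each of the $4\lceil 1/r_n \rceil^2$ cells to contain at least $(1-\epsilon)\frac{1}{4} n r_n^2$ nodes. Since $r_n > 7\sqrt{\log n / n}$, this lower bound exceeds $(1-\epsilon)\frac{49}{4}\log n$, which tends to infinity; in particular every cell is nonempty as soon as $n$ is large enough. The second ingredient is the geometry of the partition. Each cell is a square of side $s = 1/(2\lceil 1/r_n \rceil)$, and because $\lceil 1/r_n \rceil \geq 1/r_n$ we have $s \leq r_n/2$. Consequently, any two nodes lying in the same cell are at $L_\infty$-distance at most $s \leq r_n/2 < r_n$, and any two nodes lying in cells that share a side are at $L_\infty$-distance at most $2s \leq r_n$. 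In either case the definition of $\mathcal{E}_n$ places an edge between the pair. Thus the nodes of each cell form a clique, and the cliques of any two side-adjacent cells are fully interconnected.

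Finally I would observe that the cells form a grid whose side-adjacency graph is itself connected: one can walk from any cell to any other through a chain of side-adjacent cells. Combining this with nonemptiness and the edge relations above yields, for any two nodes $u,v \in \mathcal{V}_n$, a path in $\mathcal{G}_n$ that threads from the cell containing $u$, through a chain of side-adjacent (hence interconnected) cells, to the cell containing $v$. Hence $\mathcal{G}_n$ is connected whenever it is $\epsilon$-nice and $n$ is large, and since the former holds almost surely, so does connectivity. The only point demanding care is the geometric bookkeeping around the ceiling function $\lceil 1/r_n \rceil$, which is precisely what guarantees $s \leq r_n/2$ and hence that \emph{side-adjacent} (not merely identical) cells are linked; I expect this to be the main, though modest, obstacle, together with confirming that the per-cell node count stays positive uniformly across all $4\lceil 1/r_n \rceil^2$ cells.
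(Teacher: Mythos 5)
Your proposal is correct and takes essentially the same approach as the paper: invoke Theorem~\ref{thm:epsilonnice} with $\epsilon = 6/7$ to get $\epsilon$-niceness almost surely, note that every cell (of side at most $r_n/2$) is then nonempty for large $n$, and connect nodes in adjacent cells to conclude connectivity. The only cosmetic difference is that you walk through side-adjacent cells only, while the paper also allows diagonal neighbors; both suffice under the $L_\infty$ metric.
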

\begin{proof}
Assume $r_n > 7 \sqrt{\frac{\log{n}}{n}}$ for sufficiently large $n$. Then by theorem~\ref{thm:epsilonnice}, $\mathcal{G}_n$ is $\epsilon$-nice almost surely for $\epsilon=\frac{6}{7}$. Therefore, if we dissect the unit square as described in the definition of $\epsilon$-niceness, every square cell contains at least one node for sufficiently large $n$. Since the side length of each square cell is $\frac{r_n}{2}$, each point is connected to all the points contained in neighboring cells, including diagonals. This is sufficient to establish a path from any node to any other node.
\end{proof}

In fact, the threshold $\sqrt{\frac{\log{n}}{n}}$ is asymptotically tight as proven in the following theorem.
\begin{theorem}
\label{thm:disconnected}
Given a random geometric graph $\mathcal{G}_n$, there exists a constant $c_1 > 0$, independent of $n$, such that if $r_n < c_1 \sqrt{\frac{\log{n}}{n}}$ for sufficiently large $n$, then $\mathcal{G}_n$ is almost surely disconnected.
\end{theorem}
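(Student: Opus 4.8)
The plan is to prove disconnectedness by exhibiting, with probability tending to one, an \emph{isolated vertex}---a node with no other node within $L_\infty$-distance $r_n$. Since an isolated vertex has no incident edges, its presence forces $\mathcal{G}_n$ to be disconnected for $n \ge 2$. I will locate such a vertex via the second moment method applied to the count of isolated vertices.

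First I would set up the first moment. Let $X_i$ be the indicator that node $i$ is isolated and $N_0 = \sum_{i=1}^n X_i$. Conditioning on the position $x$ of node $i$, the other $n-1$ nodes must avoid the square $B_\infty(x,r_n)\cap[0,1]^2$, whose area I denote $A(x)$; thus $P(X_i=1\mid x)=(1-A(x))^{n-1}$. To avoid boundary bookkeeping I would restrict attention to \emph{interior} nodes, those with $x\in[r_n,1-r_n]^2$, for which $A(x)=4r_n^2$ exactly. For such nodes $(1-4r_n^2)^{n-1}\approx e^{-4nr_n^2}=n^{-4c_1^2}$ when $r_n=c_1\sqrt{\log n/n}$, and since the fraction of interior nodes tends to $1$, this gives $E[N_0]=\Theta(n^{\,1-4c_1^2})$. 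Choosing any $c_1<\tfrac12$ makes $E[N_0]\to\infty$.

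The main work---and the step I expect to be the real obstacle---is the second moment estimate needed to upgrade ``$E[N_0]\to\infty$'' into ``$N_0\ge 1$ with high probability.'' I would write $E[N_0^2]=E[N_0]+\sum_{i\neq j}P(X_i=X_j=1)$ and split the pairwise sum according to the separation $\|x-y\|_\infty$ of the two node positions. Pairs with $\|x-y\|_\infty\le r_n$ contribute nothing, since such nodes are neighbors and hence not isolated. For \emph{far} pairs, $\|x-y\|_\infty>2r_n$, the two forbidden squares are disjoint, so $P(X_i=X_j=1\mid x,y)=(1-A(x)-A(y))^{n-2}\le (1-A(x))^{n-2}(1-A(y))^{n-2}$; integrating shows this block contributes at most $(1+o(1))E[N_0]^2$. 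The delicate term comes from \emph{near} pairs, $r_n<\|x-y\|_\infty\le 2r_n$: here the forbidden squares overlap, the isolation events are positively correlated, and I can only bound the conditional probability by $(1-A(x))^{n-2}$. However the set of $y$ at this separation from a fixed $x$ has area $O(r_n^2)$, so integrating yields a near-pair contribution of order $n r_n^2\,E[N_0]=c_1^2(\log n)\,E[N_0]$.

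Collecting the three pieces gives $\operatorname{Var}(N_0)\le E[N_0]+o(1)\,E[N_0]^2+O(\log n)\,E[N_0]$, so that
\[
\frac{\operatorname{Var}(N_0)}{E[N_0]^2}\;\le\;\frac{1}{E[N_0]}+o(1)+\frac{O(\log n)}{E[N_0]}\;\longrightarrow\;0,
\]
because $E[N_0]=\Theta(n^{\,1-4c_1^2})$ grows polynomially while the correction is only logarithmic---this is precisely where $c_1<\tfrac12$ is used a second time. Chebyshev's inequality then gives $P(N_0=0)\le \operatorname{Var}(N_0)/E[N_0]^2\to 0$, so with probability tending to one there is at least one isolated vertex and $\mathcal{G}_n$ is disconnected. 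The subtle point throughout is the double role of the constant: $c_1$ must be small enough both that the expected number of isolated vertices diverges \emph{and} that the near-pair correlation stays negligible against $E[N_0]^2$; any $c_1<1/2$ achieves both, which yields the claimed threshold constant.
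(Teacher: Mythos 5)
Your proof is correct, but it takes a genuinely different route from the paper's. The paper does not argue from scratch: it invokes the Gupta--Kumar connectivity threshold \cite{GupKum98} (stated for the Euclidean norm) and transfers it to the $L_\infty$ metric by monotonicity---if $\mathcal{G}(\mathcal{V}_n,r_n)$ is disconnected under $\|\cdot\|_2$, then $\mathcal{G}(\mathcal{V}_n,r_n/\sqrt{2})$ is disconnected under $\|\cdot\|_\infty$, its edge set being a subset. You instead give a self-contained second-moment argument for the count of isolated interior vertices, and your estimates are sound: touching pairs ($\|x-y\|_\infty\le r_n$) contribute zero; far pairs factor, up to a $(1+o(1))$ correction, into $E[N_0]^2$; and the positively correlated near pairs live on a set of area $O(r_n^2)$, so they contribute only $O(nr_n^2)E[N_0]=O(\log n)E[N_0]$, which is swamped by $E[N_0]^2=\Theta(n^{2-8c_1^2})$ whenever $c_1<\tfrac12$; Chebyshev then finishes. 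Two details you should state explicitly: (i) your computation is pinned at $r_n=c_1\sqrt{\log n/n}$ exactly, so either note that the estimates only improve for smaller $r_n$, or invoke monotonicity of disconnectedness under a shrinking radius (the same device the paper uses for the norm change); (ii) your conclusion is ``with probability tending to one,'' which is how this paper uses ``almost surely'' (its own citation yields nothing stronger). In exchange for being longer than the paper's two-line proof, your argument works natively in the $L_\infty$ metric with no $\sqrt{2}$ loss in the constant---indeed every $c_1<\tfrac12$ works, which is the full admissible range for this norm since the $L_\infty$ threshold radius is $\tfrac12\sqrt{\log n/n}$---it yields the stronger conclusion that isolated vertices exist, and it is self-contained: in effect you have re-proved, in the simpler $L_\infty$ setting, exactly the portion of \cite{GupKum98} that the paper borrows.
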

\begin{proof}
 The results of~\cite{GupKum98} imply the above theorem. The distance metric in~\cite{GupKum98} is the Euclidean norm. But note that if a graph $\mathcal{G}(\mathcal{V}_n,r_n)$ is disconnected under the Euclidean norm metric then the graph $\mathcal{G}'\left(\mathcal{V}_n,r_n/\sqrt{2}\right)$ is also disconnected under the $L_\infty$ norm since the set of edges of $\mathcal{G}'$ is a subset of that of $\mathcal{G}$.
\end{proof}

In ad hoc wireless networks, the transmission radius $r_n$ is often limited by peak power constraints, the rapid power decay with distance and by interference constraints. A node requires less power to broadcast with a smaller transmission radius, since in path-loss models for example, power decays with the distance $d$ as $d^{-\alpha}$ where $\alpha$ is the path-loss exponent. A smaller transmission radius, or equivalently smaller transmission power, will also cause less interference to neighboring nodes.
Lemma~\ref{lemma:connected} and Theorem~\ref{thm:disconnected} together identify the sharp connectivity threshold $ \sqrt{\frac{\log{n}}{n}}$ as the asymptotically smallest transmission radius that ensures connectivity. For this reason, connectivity radii satisfying $r_n = c \sqrt{\frac{\log{n}}{n}}$ for sufficiently large $n$, will be of special interest in subsequent results, whereby the constant $c$ is assumed sufficiently large to ensure connectivity.

%\vspace{-1em}
\subsection{Grid Inequalities}
\label{gridinequalities}
Since we can carefully treat geometric graphs like grids, it makes sense to explore structural properties on grids and apply them to geometric graphs. Here, we present one such property and will demonstrate an application to it later when analyzing the capacity cut on a random geometric graph in Section~\ref{relatingtoadhoc}.

\begin{lemma}
\label{eqn:isoperimetry}
Let $(\mathcal{A},\mathcal{B})$ be a partition of $\left\{1,2,\ldots,m\right\}^2$ for some integer $m$.
Define the boundary length $\partial_{\mathcal{A},\mathcal{B}}$ to be the number of elements of $\mathcal{A}\times \mathcal{B}$ that are neighbors, including diagonals. Then for any $m\geq 3$ and any partition $(\mathcal{A},\mathcal{B})$, $\partial_{\mathcal{A},\mathcal{B}} \geq 3\min\left\{\sqrt{\left|\mathcal{A}\right|}, \sqrt{\left|\mathcal{B}\right|}\right\}$~\cite{Dia01}.
\end{lemma}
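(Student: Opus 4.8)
The plan is to reduce the statement to a projection bound on $|\mathcal{A}|$ together with a sharp count of boundary edges, with the factor $3$ emerging from the diagonal adjacencies. Assume without loss of generality that $|\mathcal{A}| \le |\mathcal{B}|$, so that $\min\{\sqrt{|\mathcal{A}|},\sqrt{|\mathcal{B}|}\} = \sqrt{|\mathcal{A}|}$ and it suffices to show $\partial_{\mathcal{A},\mathcal{B}} \ge 3\sqrt{|\mathcal{A}|}$; note that $\partial_{\mathcal{A},\mathcal{B}}$ is symmetric in the two parts. Let $p$ be the number of columns and $q$ the number of rows of $\{1,\dots,m\}^2$ that contain at least one element of $\mathcal{A}$. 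Since $\mathcal{A}$ is contained in the $p\times q$ subgrid spanned by these columns and rows, we have $|\mathcal{A}| \le pq$, and hence $\sqrt{|\mathcal{A}|}\le\sqrt{pq}$.

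The heart of the argument is the combinatorial inequality $\partial_{\mathcal{A},\mathcal{B}}\ge \tfrac{3}{2}(p+q)$. Granting this, the arithmetic-geometric mean inequality gives $\tfrac{3}{2}(p+q)\ge 3\sqrt{pq}\ge 3\sqrt{|\mathcal{A}|}$, which closes the proof. To see why the constant should be $\tfrac{3}{2}$ per occupied line, consider the extremal configuration: a single cell in a corner has $p=q=1$ and exactly $3$ boundary edges (one horizontal, one vertical, one diagonal), meeting both $\tfrac{3}{2}(p+q)=3$ and $3\sqrt{|\mathcal{A}|}=3$ with equality. More generally, wherever $\mathcal{A}$ meets $\mathcal{B}$ across a vertical interface between two adjacent columns, the horizontal crossing edge is accompanied by the two diagonal edges that straddle it, so each unit of interface contributes roughly three edges; this is the mechanism that turns the bare edge count into the factor $3$, and it is precisely why the diagonal adjacencies cannot be discarded.

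I would prove $\partial_{\mathcal{A},\mathcal{B}}\ge\tfrac{3}{2}(p+q)$ by a charging argument that distributes boundary edges among the $p$ occupied columns and $q$ occupied rows so that each receives weight at least $\tfrac{3}{2}$. The main obstacle is that this charging cannot be carried out by a naive symmetric split of the horizontal, vertical, and diagonal edges between the two axes: a fully occupied column, for instance, meets $q=m$ rows yet has a purely ``horizontal'' boundary, so the diagonal edges must be reallocated globally rather than by a fixed per-edge rule, and the grid border truncates the diagonal neighbors of boundary cells. Handling these cases --- full rows and columns, components that touch the border of $\{1,\dots,m\}^2$, and the attendant double counting --- while still extracting the sharp constant $\tfrac{3}{2}$ is where the real work lies; the hypothesis $m\ge 3$ is used to dispose of the small degenerate grids in which these border effects would otherwise spoil the constant.
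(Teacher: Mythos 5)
Your reduction is sound as far as it goes: assuming $|\mathcal{A}|\le|\mathcal{B}|$, the projection bound $|\mathcal{A}|\le pq$ and the AM--GM inequality correctly convert the claim $\partial_{\mathcal{A},\mathcal{B}}\ge\tfrac{3}{2}(p+q)$ into the stated lemma. The problem is that this claim \emph{is} the lemma: all of the combinatorial content --- the constant $3$, the role of the diagonal adjacencies, the border effects --- has been pushed into an inequality that you do not prove. You sketch a charging argument, correctly list the obstacles (full rows and columns, truncation at the grid border, double counting), and then state that overcoming them ``is where the real work lies.'' That is an accurate self-assessment, and it means the proposal is a reduction to an unproven statement of at least the same depth as the target, not a proof. (For reference, the paper itself does not prove the lemma inline either; it defers to the combinatorial proof in Section 4 of~\cite{Dia01}, which is exactly the work your proposal leaves undone.)

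There is also a concrete reason the charging scheme, as described, cannot be completed. Your intermediate claim is \emph{false} without the hypothesis $|\mathcal{A}|\le|\mathcal{B}|$: take $\mathcal{A}$ to be the bottom $m-1$ rows, so $p=m$ and $q=m-1$, while the boundary consists of the $m$ vertical and $2(m-1)$ diagonal adjacencies along the top interface, giving $\partial_{\mathcal{A},\mathcal{B}}=3m-2<3m-\tfrac{3}{2}=\tfrac{3}{2}(p+q)$. So any valid proof of the claim must make essential, global use of which side is smaller. But a charging rule that pays occupied rows and columns out of boundary edges is inherently local: in the counterexample above, the rows that fail to get paid are precisely the occupied rows deep in the interior, which never see a boundary edge, and your sketch offers no mechanism for paying such rows (you acknowledge the need for ``global reallocation'' but do not supply one, and the counterexample shows that supplying it is equivalent to injecting the size hypothesis into the combinatorics). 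Note further that the claim, if true, is sharp --- a single corner cell gives $\partial_{\mathcal{A},\mathcal{B}}=3=\tfrac{3}{2}(p+q)$ --- so there is no constant to give away in the charging. To close the gap you must either prove the claim with the constraint $|\mathcal{A}|\le|\mathcal{B}|$ entering explicitly, or abandon this route in favor of a direct case analysis on rows and columns that are wholly contained in one class versus mixed, as in the cited reference.
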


A combinatorial proof can be found in Section 4 in~\cite{Dia01}. The lemma is illustrated in Fig.~\ref{fig:isoperimetry} on a $4\times 4$ grid. %This inequality can be tight, for example the degenerate case where $\mathcal{A}$ consists of a single corner point, say $\{(1,1)\}$ and the set $\mathcal{B}=\mathcal{A}^c$. In that case, $\partial_{\mathcal{A},\mathcal{B}} = 3\min\left\{\sqrt{\left|\mathcal{A}\right|}, \sqrt{\left|\mathcal{B}\right|}\right\} = 3$.

\begin{figure}
\centering
\centerline{\epsfig{figure=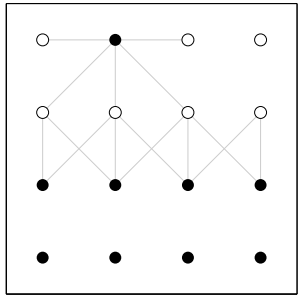,scale=.75}}
\caption{An application of the grid inequality with $m=4$, $\mathcal{A}$ is the set of black dots, $\mathcal{B}$ is the set of white dots, $|\mathcal{A}|=9$, $|\mathcal{B}|=7$, $\partial_{\mathcal{A},\mathcal{B}}=15$}
\label{fig:isoperimetry}
\end{figure}

\section{Lower Bound on Cut Set Capacity}
\label{relatingtoadhoc}

\subsection{Main Results}
In this section, we argue that the $\epsilon$-niceness property of the induced random geometric graph of the network, together with the grid inequality stated in section~\ref{gridinequalities} yield a lower bound on the cut-set capacity $C(\mathcal{V}_\mathcal{S})$ of broadcast wireless networks. The following theorem identifies this lower bound as a function of the number of nodes $n$, transmission radius $r_n$ and erasure probabilities.

\begin{theorem}
\label{thm:capcut}
Consider the setting of a wireless ad hoc erasure relay network, with $n$ nodes $\mathcal{V}_n$ independently and uniformly distributed on the unit square $[0,1]^2$. Let $\epsilon \in (0,\frac{1}{5})$ and let $\left\{r_n\right\}_{n=1}^{\infty}$ be a sequence of positive radii such that $\lim_{n\rightarrow\infty}r_n=0$ and $r_n \geq \frac{6}{\epsilon}\sqrt{\frac{\log{n}}{n}}$ for $n$ large enough. Let $\mathcal{G}_n = \mathcal{G}(\mathcal{V}_n,r_n)$ be the corresponding sequence of random geometric graphs. Let $\mathcal{V}_\mathcal{S}$ be an arbitrary $\mathcal{S}$-cut such that $|\mathcal{V}_\mathcal{S}|=\alpha n$ for some $\alpha > 0$. Then $\mathcal{G}_n$ is almost surely $\epsilon$-nice and if $\gamma_{uv} = \gamma~\forall~(u,v)\in\mathcal{E}$ is the erasure probability across all connected links, then, for sufficiently large $n$, the capacity cut set bound $C(\mathcal{V}_\mathcal{S})$ is lower bounded by the following expression:
\begin{equation}
\label{lowerbound}
C(\mathcal{V}_\mathcal{S}) \geq \frac{3}{2} \min\left\{\sqrt{\alpha-2\epsilon},\sqrt{1-\alpha-2\epsilon}\right\}
\frac{1-2\epsilon}{\sqrt{1+\epsilon}}nr_n \left( 1 - \gamma^{(1-2\epsilon)\frac{1}{4}n r_n^2}\right)
\end{equation}
\end{theorem}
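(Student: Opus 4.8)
The plan is to lower bound the cut-set capacity $C(\mathcal{V}_\mathcal{S})$ in equation~(\ref{eqn:capacity}) by combining the geometric regularity guaranteed by $\epsilon$-niceness (Theorem~\ref{thm:epsilonnice}) with the grid isoperimetric inequality (Lemma~\ref{eqn:isoperimetry}). The strategy is to view the cut $\mathcal{V}_\mathcal{S}$ on the random geometric graph as inducing a partition of the $4\lceil 1/r_n\rceil^2$ cells, and then count cells that straddle the boundary. Since $\mathcal{G}_n$ is almost surely $\epsilon$-nice for the given radius regime, I would first establish that every summand in~(\ref{eqn:capacity}) is uniformly bounded below: for each node $u \in \mathcal{V}_\mathcal{S}^*$ that contributes, the factor $1 - \prod_v \gamma_{ij}$ is at least $1 - \gamma^{k}$ where $k$ is the number of outgoing edges of $u$ crossing the cut, and since all neighbors in adjacent cells are connected, $k$ is at least the per-cell population $(1-\epsilon)\tfrac14 nr_n^2$ whenever $u$ sits in a cell adjacent to an entirely-$\mathcal{V}_\mathcal{D}$ cell. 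This explains the $1 - \gamma^{(1-2\epsilon)\frac14 nr_n^2}$ factor in~(\ref{lowerbound}), with the slack from $\epsilon$ to $2\epsilon$ absorbing edge effects near the boundary of the square.

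Next I would count how many distinct nodes in $\mathcal{V}_\mathcal{S}^*$ guarantee such a contribution. The idea is to assign cells to $\mathcal{A}$ or $\mathcal{B}$ according to whether their node-majority lies in $\mathcal{V}_\mathcal{S}$ or $\mathcal{V}_\mathcal{D}$, apply Lemma~\ref{eqn:isoperimetry} with $m = 2\lceil 1/r_n\rceil$, and conclude that the boundary length $\partial_{\mathcal{A},\mathcal{B}} \geq 3\min\{\sqrt{|\mathcal{A}|},\sqrt{|\mathcal{B}|}\}$. The constraint $|\mathcal{V}_\mathcal{S}| = \alpha n$ translates, via the near-uniform cell populations, into $|\mathcal{A}| \approx \alpha m^2$ up to an $\epsilon$-perturbation, so $\min\{\sqrt{|\mathcal{A}|},\sqrt{|\mathcal{B}|}\} \gtrsim m\min\{\sqrt{\alpha-2\epsilon},\sqrt{1-\alpha-2\epsilon}\}$. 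Each boundary cell pair furnishes at least one node in $\mathcal{V}_\mathcal{S}^*$ with a full complement of crossing edges, so the number of contributing terms is at least $\tfrac32 m \min\{\sqrt{\alpha-2\epsilon},\sqrt{1-\alpha-2\epsilon}\}$. Substituting $m \geq 2/r_n$ and multiplying by the per-node factor yields~(\ref{lowerbound}), where the $nr_n$ scaling emerges from $m \cdot$ (per-cell count) $\sim (1/r_n)(nr_n^2) = nr_n$, and the $1/\sqrt{1+\epsilon}$ normalizes the upper bound on cell populations.

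The main obstacle I anticipate is making the translation between the continuous cut $\mathcal{V}_\mathcal{S}$ and the discrete cell-partition $(\mathcal{A},\mathcal{B})$ rigorous without double-counting or losing boundary nodes. Specifically, a single cell may contain nodes on both sides of the cut, so the majority-assignment must be argued not to undercount $|\mathcal{V}_\mathcal{S}^*|$: I would need to verify that whenever adjacent cells are assigned to opposite sides, at least one genuine cut-crossing node exists with the claimed out-degree, and that these witness nodes are distinct across distinct boundary-cell pairs. Handling cells that are ``mixed'' (nonzero population on both sides) requires care, since Lemma~\ref{eqn:isoperimetry} counts cell-adjacencies rather than node-adjacencies; the $\epsilon$-niceness guarantee is what forces cells to be populated enough that a mixed or opposite-side cell reliably produces a crossing edge of degree $\geq (1-\epsilon)\tfrac14 nr_n^2$. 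Propagating the $\epsilon$ bookkeeping consistently through both the population counts and the isoperimetric bound—to land exactly on the stated constants $(1-2\epsilon)$, $\sqrt{1+\epsilon}$, and the factor $\tfrac32$—will be the most delicate part of the argument.
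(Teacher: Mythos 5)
Your scaffolding matches the paper's: dissect the square into cells of side $r_n/2$, invoke $\epsilon$-niceness, two-color the cells, apply the grid inequality of Lemma~\ref{eqn:isoperimetry}, and multiply the boundary count by per-cell populations. But the step you defer as ``the most delicate part'' --- handling mixed cells --- is precisely where your approach breaks, and fixing it requires ideas your proposal does not contain. Majority coloring fails for two reasons. First, the cell partition it produces need not reflect the node partition at all: take $\alpha = \tfrac12$ and a cut that places roughly half of each cell's nodes in $\mathcal{V}_\mathcal{S}$ and half in $\mathcal{V}_\mathcal{D}$ (nothing forbids this, since $\mathcal{V}_\mathcal{S}$ is an \emph{arbitrary} cut). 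Then every cell is essentially 50/50, majority assignment with ties broken arbitrarily can make $\mathcal{A}$ all cells and $\mathcal{B}$ empty, so $\partial_{\mathcal{A},\mathcal{B}} = 0$ and your bound degenerates to zero --- even though the true $C(\mathcal{V}_\mathcal{S})$ is of order $n$, coming entirely from crossing edges \emph{inside} cells, which your argument never counts. Your claim that $|\mathcal{A}| \approx \alpha m^2$ ``up to an $\epsilon$-perturbation'' is false for the same reason: $\epsilon$-niceness controls total cell populations, not the white/black split within a cell. Second, even in favorable configurations a majority-white cell only guarantees about half its population, i.e. $\tfrac12(1-\epsilon)\tfrac14 n r_n^2$, on the white side; since $\epsilon < \tfrac15$ implies $1-2\epsilon > \tfrac12(1+\epsilon)$, the constant $(1-2\epsilon)\tfrac14 n r_n^2$ appearing in both the witness count and the exponent of~(\ref{lowerbound}) is unreachable by a majority argument.

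The paper resolves exactly this with two devices you would need to add. It uses \emph{three} colors: a cell is white (resp.\ black) only if it contains at most $\tfrac15 \epsilon n r_n^2$ minority nodes, and grey otherwise, and it splits into cases on the number $G_n$ of grey cells. If $G_n \geq \tfrac{25}{\epsilon r_n}$, the isoperimetric inequality is not used at all: each grey cell by itself contains at least $\tfrac15\epsilon n r_n^2$ white and $\tfrac15\epsilon n r_n^2$ black nodes, all mutually connected within the cell, and summing these within-cell contributions already yields the bound (this is what rescues the 50/50 example above). If $G_n < \tfrac{25}{\epsilon r_n}$, the paper recolors each grey cell monochromatically in the pessimistic direction, justifying by a concavity argument (the restricted cut value is concave in the number of white nodes of a cell, hence minimized at an extreme point of $[0,t]$) that recoloring cannot increase the cut; only then is the grid inequality applied, and the fact that at most $\epsilon n$ nodes were recolored is what yields $W_n \geq \tfrac{4(\alpha-2\epsilon)}{(1+\epsilon)r_n^2}$ and the $(1-2\epsilon)$ constants. (Your worry about distinct witnesses across boundary pairs is legitimate --- indeed the paper's final step multiplies the \emph{pair} count $\delta\mathcal{G}$ by per-pair contributions although each node's contribution is capped at $1$ --- but that affects only constant factors; the mixed-cell issue is the structural one.) Without the grey-cell case split and the concavity-based recoloring, your isoperimetric step has nothing valid to act on, so the proposal as written has a genuine gap.
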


\begin{proof}
The proof proceeds with a similar spirit to the proof of Lemma 5.2 in~\cite{Dia01}. For a full proof, see the Appendix.
\end{proof}

So far, we have shown that if we have a cut between  a constant fraction $\alpha_1 n$ of the nodes on one side and another fraction $\alpha_2 n$ of the nodes on the other, then due to the network topology described in Section~\ref{assumptions} the capacity cut-set evaluated at that particular cut cannot be asymptomatically smaller than a multiple of $n r_n$, as indicated in~(\ref{lowerbound}). To characterize the capacity of the entire network however, all cuts separating the source(s) and destination(s) have to be considered. Due to the special structure of random geometric graphs and Bernoulli point processes and their resemblance to grid, intuitively, for the single source, single destination case, the minimizing cut will almost surely put only a constant number of nodes (the source node and possibly a constant number of nodes it is connected to) on one end and all remaining nodes on the other end, assuming equal erasure probabilities across all links. Since such cuts are not balanced, that is, do not separate a constant fraction of the nodes from the others,~(\ref{lowerbound}) is not immediately applicable. However, it is not intuitive what the minimizing cut is when the network has multiple sources and multiple destinations spread arbitrarily in the unit square. The following theorem generalizes the bound we just derived in the multiple sources, multiple destinations case. Intuitively, when each of the number of source and destination nodes is a constant fraction of all the nodes, the minimizing cut-set will also be balanced and hence the bound derived in Theorem~\ref{thm:capcut} applies.

\begin{theorem}
\label{thm:mainthm}
In the network setup described above, assume there are $\alpha_1 n$ source nodes and $\alpha_2 n$ destination nodes. Assume that each of the source nodes and destination nodes can communicate among each other via incapacitated, error free links. Then the total end-to-end throughput $C_{broadcast}$ is lower bounded by
$$C_{broadcast} \geq \frac{3}{2} \sqrt{\alpha-2\epsilon}
\frac{1-2\epsilon}{\sqrt{1+\epsilon}}nr_n \left( 1 - \gamma^{(1-2\epsilon)\frac{1}{4}n r_n^2}\right)$$
where $\alpha = \min\{\alpha_1, \alpha_2, 1-\alpha_1, 1-\alpha_2\}$.
\end{theorem}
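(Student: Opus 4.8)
The plan is to deduce Theorem~\ref{thm:mainthm} from Theorem~\ref{thm:capcut} by combining the min-cut characterization of capacity with a short scalar optimization over the \emph{size} of the separating cut. First I would invoke~(\ref{eqn:mincut}): since the source nodes cooperate over error-free links and likewise the destination nodes, the set $\mathcal{S}_n$ behaves as a single super-source and $\mathcal{D}_n$ as a single super-destination, so $C_{broadcast} = \min C(\mathcal{V}_\mathcal{S})$, the minimum being over all partitions with $\mathcal{S}_n \subseteq \mathcal{V}_\mathcal{S}$ and $\mathcal{D}_n \subseteq \mathcal{V}_\mathcal{D}$. The free cooperation links never cross such a cut, since both their endpoints lie on the same side, so $C(\mathcal{V}_\mathcal{S})$ is still given by~(\ref{eqn:capacity}) over the erasure edges alone, which is exactly the quantity bounded in Theorem~\ref{thm:capcut}.

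Next I would constrain the admissible cut sizes. Any valid cut satisfies $\mathcal{S}_n \subseteq \mathcal{V}_\mathcal{S}$ and $\mathcal{D}_n \subseteq \mathcal{V}_\mathcal{S}^c$, hence $\alpha_1 n \leq |\mathcal{V}_\mathcal{S}| \leq (1-\alpha_2)n$; writing $|\mathcal{V}_\mathcal{S}| = \beta n$ gives $\beta \in [\alpha_1, 1-\alpha_2]$, a nonempty interval because $\mathcal{S}_n, \mathcal{D}_n$ are disjoint so $\alpha_1 + \alpha_2 \leq 1$. The point that lets one almost-sure event cover all exponentially many cuts is that $\epsilon$-niceness is a property of the node configuration alone (Theorem~\ref{thm:epsilonnice}): once $\mathcal{G}_n$ is $\epsilon$-nice, the bound of Theorem~\ref{thm:capcut} holds \emph{simultaneously and deterministically} for every cut, with $\alpha$ replaced by that cut's $\beta = |\mathcal{V}_\mathcal{S}|/n$. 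Thus, almost surely,
\begin{equation*}
C_{broadcast} \geq \min_{\beta \in [\alpha_1,\, 1-\alpha_2]} \frac{3}{2}\min\left\{\sqrt{\beta - 2\epsilon},\, \sqrt{1-\beta-2\epsilon}\right\}\frac{1-2\epsilon}{\sqrt{1+\epsilon}}\, n r_n \left(1 - \gamma^{(1-2\epsilon)\frac{1}{4}n r_n^2}\right).
\end{equation*}

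The remaining step is the scalar optimization, which I expect to be routine. The prefactor $h(\beta) = \min\{\sqrt{\beta-2\epsilon},\, \sqrt{1-\beta-2\epsilon}\}$ is a minimum of two concave functions, hence concave, so it attains its minimum over the compact interval $[\alpha_1, 1-\alpha_2]$ at an endpoint; concretely $\sqrt{\beta-2\epsilon}$ increases and $\sqrt{1-\beta-2\epsilon}$ decreases, crossing at $\beta = \tfrac12$. Evaluating at $\beta = \alpha_1$ and $\beta = 1-\alpha_2$ and using monotonicity of $x \mapsto \sqrt{x - 2\epsilon}$ gives
$$\min_{\beta \in [\alpha_1,\, 1-\alpha_2]} h(\beta) = \sqrt{\min\{\alpha_1,\, 1-\alpha_1,\, \alpha_2,\, 1-\alpha_2\} - 2\epsilon} = \sqrt{\alpha - 2\epsilon},$$
with $\alpha = \min\{\alpha_1, \alpha_2, 1-\alpha_1, 1-\alpha_2\}$, which is exactly the claimed expression. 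I would state $\alpha > 2\epsilon$ as the operative regime (needed for the bound to be real and nontrivial); it also guarantees $[\alpha_1, 1-\alpha_2] \subseteq [2\epsilon, 1-2\epsilon]$, so the square roots are nonnegative and Theorem~\ref{thm:capcut} applies to every admissible $\beta$.

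The main obstacle is conceptual rather than computational: justifying the first move, namely that the cooperative multi-source/multi-destination capacity is still governed by the single-commodity min-cut~(\ref{eqn:mincut}) applied to the super-source and super-destination, and that Theorem~\ref{thm:capcut} may be applied to \emph{every} separating cut at once. Both rest on the observation that the almost-sure $\epsilon$-niceness event is geometric and cut-independent, so no union bound over the $2^{\Theta(n)}$ cuts is required; this is what keeps the almost-sure guarantee clean despite minimizing over exponentially many cuts.
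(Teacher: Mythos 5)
Your proposal is correct and follows essentially the same route as the paper's own proof: model cooperation by a super-source and super-destination joined by incapacitated links, observe that the minimizing cut must place all sources in $\mathcal{V}_\mathcal{S}$ and all destinations in $\mathcal{V}_\mathcal{D}$ (so $|\mathcal{V}_\mathcal{S}|/n \in [\alpha_1, 1-\alpha_2]$), and then apply Theorem~\ref{thm:capcut} together with the endpoint optimization to obtain the factor $\sqrt{\alpha-2\epsilon}$. Your explicit remarks that the $\epsilon$-niceness event is cut-independent (so no union bound over the exponentially many cuts is needed) and that $\alpha>2\epsilon$ is the operative regime are points the paper leaves implicit, but they belong to the same argument rather than a different one.
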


\begin{figure}
\centering
\centerline{\epsfig{figure=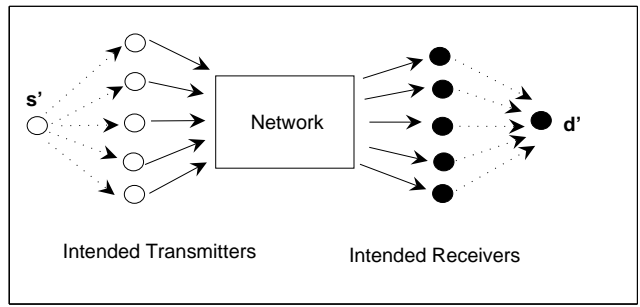,scale=.75}}
\caption{Illustrating the addition of a theoretical source and destination nodes}% that can communicate with the set of sources and destinations via incapacitated error free links (dashed).}
\label{fig:idealntwk}
\end{figure}

\begin{proof}
 The capability of the source nodes to communicate among each others via incapacitated, error free links can be modeled by adding a theoretical source node $s'$ which connects to all $\alpha_1 n$ source nodes via incapacitated error free links and a theoretical destination node $d'$ to which all $\alpha_2 n$ destination nodes connect via incapacitated error free links, as shown in Fig.~\ref{fig:idealntwk}. Then apply the single source, single destination capacity cut set bound of section~\ref{capCutSet} and consider the cut $\mathcal{V}_\mathcal{S}$ that minimizes the capacity expression in equation~\ref{eqn:capacity}. Any constraints from cuts that cut through the incapacitated links cannot be tight as the capacity of an incapacitated link can be made arbitrarily large. Therefore, the cut defining the capacity $\mathcal{V}_\mathcal{S}$ must be such that all original $\alpha_1 n$ source nodes are on one side and all original $\alpha_2 n$ destination nodes are on the other side. If $|\mathcal{V}_\mathcal{S}|=\alpha' n$, then $\min\{\alpha_1,1-\alpha_1,\alpha_2,1-\alpha_2\} < \min\{\alpha',1-\alpha'\}$. The result then follows.
\end{proof}
%\vspace{1em}
\subsection{Scaling Laws}
We have thus established a scaling law of $\Theta(n r_n)$ in the case of wireless broadcast erasure networks when the sources and destinations each form a constant fraction of the nodes. The effect of the erasure probabilities is not very significant in the lower bound we derived, at least when it is constant (and not a function of $n$) and when $n$ is large. In that case, for any nontrivial erasure probability $\gamma$ ($\gamma<1$) , its effect to the lower bound established in theorem~\ref{thm:mainthm} can be made arbitrarily small for sufficiently large $n$, since $\lim_{n\rightarrow\infty} \gamma^{(1-2\epsilon)\frac{1}{4}nr_n^2} = 0$ for $r_n \geq \frac{6}{\epsilon} \sqrt{\frac{\log{n}}{n}}$. For the critical value of $r_n$ which scales as $\sqrt{\frac{\log{n}}{n}}$, the proven lower bound scales as $\sqrt{n\log{n}}$, which agrees up to a $\sqrt{\log{n}}$ factor with the $\sqrt{n}$ scaling law shown in~\cite{GupKum00} although the models are different.\\

When the transmission radius $r_n$ scales as $\sqrt{\frac{\log{n}}{n}}$, we observe a diminishing law of returns in throughput as more nodes are added to the network. This entails a significant increase in end-to-end throughput upon adding a new node only when the number of nodes $n$ is relatively small. In fact, if the number of nodes is sufficiently small, the induced random geometric graph might even be disconnected. We identify regimes with a relatively small number of nodes as being \emph{power limited}. In such networks, increasing the transmission power of each node increases throughput significantly. As more and more nodes are added to the network, we expect that the network becomes \emph{interference limited} whereby increasing the power of all the nodes no longer yields significant gains. Fig.~\ref{fig:regimes} illustrates these two regimes. Although the issue of interference is not directly addressed in this paper, one can model that indirectly by a suitable choice of the transmission radius $r_n$ proportional to $n$, and erasure probability $\gamma_n$ that increases with $n$. As the number of nodes increase, more opportunities are provided to transmit the messages from source nodes to destination nodes, hence increasing the end-to-end throughput, but each node's transmission radius decreases and the links' erasure probabilities increases, thus limiting the net throughput gain.

\begin{figure}
\centering
\centerline{\epsfig{figure=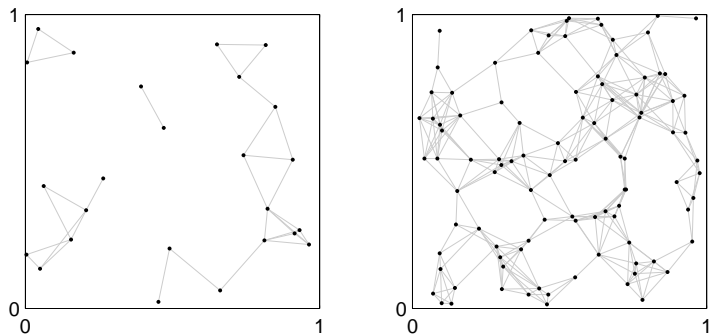,scale=.75}}
\caption{A power limited network (to the left) and an interference limited network (to the right).}
\label{fig:regimes}
\end{figure}

\subsection{Tightness of Lower Bound}
\label{sec:tight}
We now argue that the lower bound presented above is tight, in the sense that for every $n$, there exists a choice (actually many choices) of source and destination nodes that would yield a network capacity of at most $\Theta(nr_n)$. One simple example is to assign all nodes in the left rectangle, i.e. with abscissa smaller than or equal to 0.5, as source nodes and all nodes in the right rectangle, i.e. with abscissa larger than 0.5, as destination nodes as illustrated in Fig.~\ref{fig:tight}. To see this, we dissect the unit square into square cells each of side length $r_n$. Consider the cut $\mathcal{V}_\mathcal{S} = \left\{x \textrm{ s.t. } x_1 \leq 0.5 \right\}$ where $x_1$ denotes the abscissa of $x$. By $\epsilon$-niceness, each cell contains at most $(1+\epsilon)n r_n^2$ nodes for any $\epsilon\in(0,1)$. Since the side length of the cell is $r_n$, if two nodes are connected, then they must be neighbors. Then, the following cut-set bound follows:
\begin{equation}
C(\mathcal{V}_\mathcal{S}) \leq \frac{1}{r_n} nr_n^2\left(1+\epsilon\right)\left(1-\gamma^{3(1+\epsilon)nr_n^2}\right) = \Theta(nr_n)
\end{equation}
We have thus shown that the lower bound is tight. This example illustrates a simple design principle: if the node locations in a network are i.i.d. distributed uniformly in the unit square but the network designer could chose which nodes are to be source nodes and which are to be destination nodes, then it is best to ``scatter'' the sources and destinations rather than ``clutter'' all source nodes together and all destination nodes together as shown in Fig.~\ref{fig:tight}. If source nodes and destination are paired up, as suggested in Fig.~\ref{fig:loose} for example, then a linear capacity scaling would be possible. This example is illustrative but not very practical because in this scenario information between sources and destination is confined within small ranges of order $\Theta(r_n)$ and is not really ``transported''.

\begin{figure}
\centering
\centerline{\epsfig{figure=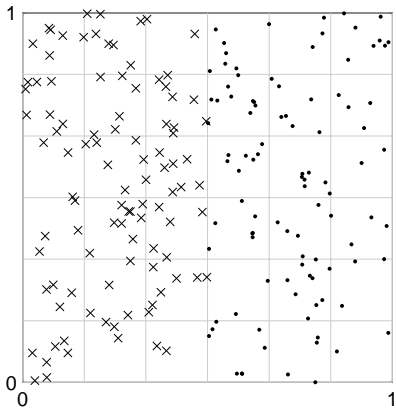,scale=.75}}
\caption{Illustrating a choice of source and destination nodes such that network capacity = $\Theta(nr_n)$.}
\label{fig:tight}
\end{figure}

\begin{figure}
\centering
\centerline{\epsfig{figure=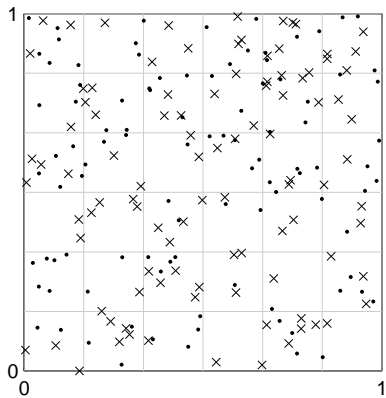,scale=.75}}
\caption{Illustrating a choice of source and destination nodes yielding capacity of order $\Theta(n)$.}
\label{fig:loose}
\end{figure}

\section{Multicast}
\label{sec:multicast}
We now investigate relaxing the broadcast constraint on the nodes, i.e. the constraint that each node (including relay nodes) must transmit the same signal on all its outgoing edges. This will yield a gain in the end-to-end capacity. However, this gain will greatly depend on $r_n$ and the erasure probabilities. We expect the gain of multicast to be more pronounced for large transmission radii since then the node will be able to communicate with more nodes and transmit more distinct messages across the network. Also, this gain will be more apparent for small erasure probabilities. Intuitively, if transmissions are very unlikely to be successful and the expected number of successful transmissions for each node is only a constant, say one, then we do not expect significant gains from relaxing the broadcast constraint. The following theorems formalizes this gain.

\begin{theorem}
\label{thm:relaxingbroadcast}
If the broadcast constraint is relaxed and each node is allowed to send possibly distinct messages across its outgoing links, then the lower bound of the capacity $C_{multicast}$ of the network is:
$$C_{multicast} \geq \frac{3}{8} \sqrt{\alpha-2\epsilon}
\frac{(1-2\epsilon)^2}{\sqrt{1+\epsilon}}n^2 r_n^3 (1-\gamma)$$
where $\alpha = \min\{\alpha_1, \alpha_2, 1-\alpha_1, 1-\alpha_2\}$.
\end{theorem}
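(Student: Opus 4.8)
The plan is to observe that relaxing the broadcast constraint alters only the per-cut capacity expression, while the geometric machinery behind Theorem~\ref{thm:capcut} — the $\epsilon$-niceness of $\mathcal{G}_n$ and the grid inequality of Lemma~\ref{eqn:isoperimetry} — transfers without change. In the broadcast model a node $u\in\mathcal{V}_\mathcal{S}^*$ emits a single symbol and contributes $1-\prod_v\gamma_{uv}\le 1$ across the cut, as in (\ref{eqn:capacity}). Once $u$ may transmit an independent symbol on each outgoing link, every edge of $[\mathcal{V}_\mathcal{S},\mathcal{V}_\mathcal{D}]$ becomes an independent erasure channel of rate $1-\gamma$, so the governing per-cut capacity is
\[
C_{multicast}(\mathcal{V}_\mathcal{S})=\sum_{(u,v)\in[\mathcal{V}_\mathcal{S},\mathcal{V}_\mathcal{D}]}(1-\gamma)=(1-\gamma)\sum_{u\in\mathcal{V}_\mathcal{S}^*}d_u,
\]
where $d_u$ is the number of outgoing edges of $u$ lying in the cut-set. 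Exactly as for (\ref{eqn:mincut}) the capacity is the minimum of this quantity over all $\mathcal{S}$--$\mathcal{D}$ cuts, so it suffices to bound $C_{multicast}(\mathcal{V}_\mathcal{S})$ from below for an arbitrary balanced cut and then, as in Theorem~\ref{thm:mainthm}, adjoin a super-source $s'$ and super-sink $d'$ through incapacitated error-free links to reduce the $\alpha_1 n$/$\alpha_2 n$ setting to a single balanced cut with $\alpha=\min\{\alpha_1,\alpha_2,1-\alpha_1,1-\alpha_2\}$.

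The first step is to bound each cut-degree $d_u$ from below. Partition $[0,1]^2$ into the cells of Definition~\ref{def:epsnice}, each of side $\approx r_n/2$, so that two nodes lying in the same or in (diagonally) adjacent cells are within $r_n$ and hence connected. Any $u\in\mathcal{V}_\mathcal{S}^*$ has an outgoing cut edge to some $v\in\mathcal{V}_\mathcal{D}$; the cell containing $v$ is then adjacent to the cell of $u$, so $u$ is in fact connected to \emph{every} node of that destination cell. By $\epsilon$-niceness each cell holds at least $(1-2\epsilon)\tfrac14 nr_n^2$ nodes, the extra $\epsilon$ absorbing the boundary-cell bookkeeping of the appendix proof of Theorem~\ref{thm:capcut}, whence $d_u\ge(1-2\epsilon)\tfrac14 nr_n^2$ for every $u\in\mathcal{V}_\mathcal{S}^*$. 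Substituting,
\[
C_{multicast}(\mathcal{V}_\mathcal{S})\ge(1-\gamma)(1-2\epsilon)\tfrac14 nr_n^2\,\bigl|\mathcal{V}_\mathcal{S}^*\bigr|.
\]

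The remaining factor $\bigl|\mathcal{V}_\mathcal{S}^*\bigr|$ is precisely the quantity already bounded inside the proof of Theorem~\ref{thm:capcut}: there the broadcast estimate (\ref{lowerbound}) is produced as $\bigl|\mathcal{V}_\mathcal{S}^*\bigr|$ times the per-node success probability $1-\gamma^{(1-2\epsilon)\frac14 nr_n^2}$, so dividing out that factor gives $\bigl|\mathcal{V}_\mathcal{S}^*\bigr|\ge\frac{3}{2}\sqrt{\alpha-2\epsilon}\,\frac{1-2\epsilon}{\sqrt{1+\epsilon}}\,nr_n$. Multiplying the two displays yields $C_{multicast}(\mathcal{V}_\mathcal{S})\ge\frac{3}{8}\sqrt{\alpha-2\epsilon}\,\frac{(1-2\epsilon)^2}{\sqrt{1+\epsilon}}\,n^2 r_n^3(1-\gamma)$, since $\tfrac32\cdot\tfrac14=\tfrac38$ and $(1-2\epsilon)\cdot(1-2\epsilon)=(1-2\epsilon)^2$. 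Taking the minimum over cuts and invoking the super-source/super-sink reduction then gives the stated lower bound on $C_{multicast}$.

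The main obstacle is not the geometry, which is inherited from Theorem~\ref{thm:capcut}, but the information-theoretic justification of the per-cut formula: I must argue that under the relaxed constraint the network is one of independent point-to-point erasure channels, for which the cut value $\sum_{(u,v)\in[\mathcal{V}_\mathcal{S},\mathcal{V}_\mathcal{D}]}(1-\gamma)$ upper bounds the flow across the cut and the corresponding min-cut is achievable — the analogue of the achievability result of \cite{DanGow06} underpinning (\ref{eqn:mincut}). A secondary point requiring care is the $\epsilon$-accounting in $d_u\ge(1-2\epsilon)\tfrac14 nr_n^2$: I must verify that the same boundary-cell classification producing the $\sqrt{\alpha-2\epsilon}$ and $(1-2\epsilon)$ factors of Theorem~\ref{thm:capcut} simultaneously guarantees that each counted node of $\mathcal{V}_\mathcal{S}^*$ sees a full adjacent destination cell, so that the two lower bounds above are compatible and may legitimately be multiplied.
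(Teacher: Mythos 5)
Your high-level plan --- replace the broadcast cut expression by the wireline network-coding cut value $(1-\gamma)\,\lvert[\mathcal{V}_\mathcal{S},\mathcal{V}_\mathcal{D}]\rvert$ (justified via the achievability results of \cite{Ahl00,Li03,KoeMed03}), reuse the super-source/super-sink reduction of Theorem~\ref{thm:mainthm}, and extract a factor of order $nr_n^2$ from the geometry --- is indeed the paper's strategy. But your key quantitative step fails: the claim that $d_u \geq (1-2\epsilon)\tfrac14 nr_n^2$ for \emph{every} $u\in\mathcal{V}_\mathcal{S}^*$ is false. Membership in $\mathcal{V}_\mathcal{D}$ is dictated by the cut, not by geometry: the endpoint $v$ of $u$'s cut edge may be the \emph{only} $\mathcal{V}_\mathcal{D}$-node in its cell, so calling that cell a ``destination cell'' and crediting $u$ with cut edges to all of its occupants counts edges that land in $\mathcal{V}_\mathcal{S}$ and are not in the cut-set at all. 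Concretely, take the balanced cut in which $\mathcal{V}_\mathcal{S}$ is essentially the left half-square, except that one node $w_0$ deep inside the left half is placed in $\mathcal{V}_\mathcal{D}$ (and one node on the right placed in $\mathcal{V}_\mathcal{S}$ to keep the cut balanced): every neighbor $u$ of $w_0$ lying farther than $2r_n$ from the dividing line is in $\mathcal{V}_\mathcal{S}^*$ yet has cut-degree $d_u=1$, not $\Theta(nr_n^2)$. (There is also a minor geometric slip: since cells have side $r_n/2$, the cell of $v$ need only be within two cells of that of $u$, not adjacent, so $u$ need not even be connected to all of $v$'s cell.) A secondary gap is the step ``dividing out'' the factor $1-\gamma^{(1-2\epsilon)\frac14 nr_n^2}$ from (\ref{lowerbound}) to claim $\lvert\mathcal{V}_\mathcal{S}^*\rvert \geq \tfrac32\sqrt{\alpha-2\epsilon}\,\tfrac{1-2\epsilon}{\sqrt{1+\epsilon}}\,nr_n$: Theorem~\ref{thm:capcut} is a statement about the capacity value, and its proof bounds a \emph{recolored} configuration, so no bound of that strength on the original $\lvert\mathcal{V}_\mathcal{S}^*\rvert$ follows from it (one only gets the weaker $\lvert\mathcal{V}_\mathcal{S}^*\rvert \geq C(\mathcal{V}_\mathcal{S})$, since each summand of the broadcast cut is at most $1$).

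The paper avoids both problems by counting at the cell level rather than the node level: it ``applies a similar analysis'' to the appendix proof of Theorem~\ref{thm:capcut} directly on the wireline cut value. The recoloring step still goes through because the inter-cell multicast cut value is \emph{linear} in the number $w$ of white points of a cell (hence concave, minimized at an endpoint of $[0,t]$), and after recoloring every white/black boundary pair of cells consists of a cell with at least $m=(1-2\epsilon)\tfrac14 nr_n^2$ white points adjacent to a cell with at least $m$ black points, all mutually connected. Each such pair therefore contributes at least $m^2(1-\gamma)$ to the cut --- here the guarantee of many $\mathcal{V}_\mathcal{D}$-endpoints comes from the adjacent cell being (recolored) black, which is precisely what your per-node argument lacks. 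Multiplying by the isoperimetric bound $\delta\mathcal{G}\geq \frac{6}{r_n}\sqrt{\frac{\alpha-2\epsilon}{1+\epsilon}}$ gives $\frac{6}{16}=\frac38$ and the stated bound. So your final arithmetic has the right shape, but the inequality it rests on must be replaced by this cell-pair edge count.
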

\begin{proof}
~\cite{Ahl00,Li03,KoeMed03} prove that in such \emph{wireline} networks, the corresponding capacity cut set bound is also tight:  $\sum_{u\in \mathcal{V}_\mathcal{S}}\sum_{v: (u,v)\in \mathcal{[\mathcal{V}_\mathcal{S}:\mathcal{V}_\mathcal{D}]}} (1-\gamma_{uv})$. Applying a similar analysis on this cut set bound as above yields the result.
\end{proof}

If we denote by $C_{BC}$ the lower bound proved above for the broadcast network and by $C_{MC}$ that proved for the wireline network, then the apparent gain is given by:
$$G_{MC}\triangleq\frac{C_{MC}}{C_{BC}} = \frac{1}{4}(1-2\epsilon)nr_n^2 \frac{1-\gamma}{1 - \gamma^{(1-2\epsilon)\frac{1}{4}n r_n^2}}$$

Evaluating this gain at the critical value of transmission radius $r_n = c \sqrt{\frac{\log{n}}{n}}$ for sufficiently large $c$, we get:
\begin{equation}
\label{eqn:Gmc}
G_{MC} = \frac{1}{4}(1-2\epsilon)c^2 \log{n} \frac{1-\gamma}{1 - \gamma^{(1-2\epsilon)\frac{1}{4} c^2\log{n}}}
\end{equation}

If the erasure probability $\gamma$ is constant and not a function of $n$, the asymptotic gain is a $\log{n}$ factor. This is not surprising because each node is connected to at most a constant factor of $\frac{1}{4}nr_n^2 = \frac{1}{4}c^2 \log{n}$ and hence the throughput cannot increase beyond that factor due to the ability of multicast. However, if the erasure probability $\gamma$ increases with $n$ as $\gamma_n = 1-\frac{1}{g(n)\log{n}}$, where $g(n)$ is some positive real valued function, then the gain $G_{MC}$ due to multicast scales depending on the asymptotic behavior of $g(n)$ as formalized in lemma below:

\begin{lemma}
If the erasure probability $\gamma_n$ scales as $\gamma_n = 1-\frac{1}{g(n)\log{n}}$ where $g(n)$ is a positive real valued function and $G_{MC}$ scales with $\gamma_n$ as in equation~(\ref{eqn:Gmc}), then:
\begin{enumerate}
\item
If $\lim_{n\rightarrow\infty}g(n)=0$, then we observe a significant gain $G_{MC}$: $\lim_{n\rightarrow\infty}G_{MC} = \infty$.
\item
If $\lim_{n\rightarrow\infty}g(n)$ exists and is non-zero, or is infinite, then the gain $G_{MC}$ is at most a constant.
\end{enumerate}
\end{lemma}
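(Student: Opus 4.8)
The plan is to reduce $G_{MC}$ to a one-parameter expression and then read off both regimes. First I would exploit that the coefficient multiplying $\log n$ in the prefactor of (\ref{eqn:Gmc}) and the coefficient in the exponent coincide: writing $k = \frac{(1-2\epsilon)c^2}{4}$, equation (\ref{eqn:Gmc}) becomes $G_{MC} = k\log n \cdot \frac{1-\gamma_n}{1-\gamma_n^{\,k\log n}}$. Substituting $\gamma_n = 1 - \frac{1}{g(n)\log n}$, so that $1-\gamma_n = \frac{1}{g(n)\log n}$, the factors of $\log n$ cancel in the numerator and I obtain the compact form
$$G_{MC} = \frac{k}{g(n)}\cdot \frac{1}{1 - \left(1 - \frac{1}{g(n)\log n}\right)^{k\log n}}.$$
Everything then hinges on the size of the denominator $D_n := 1 - \left(1 - \frac{1}{g(n)\log n}\right)^{k\log n}$ relative to $k/g(n)$.

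For part 1, only a crude bound is needed. Since $\gamma_n\in[0,1)$ is a valid erasure probability, $0 \le 1 - \frac{1}{g(n)\log n} < 1$, so the power lies in $[0,1)$ and hence $D_n \in (0,1]$. Therefore $G_{MC} \ge \frac{k}{g(n)}$, and because $g(n)\to 0$ this lower bound diverges, giving $\lim_{n\to\infty}G_{MC} = \infty$.

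For part 2, I would instead produce an upper bound on $G_{MC}$, i.e. a lower bound on $D_n$. Using $1-x\le e^{-x}$ with $x=\frac{1}{g(n)\log n}$ and raising to the power $k\log n$ gives $\left(1-\frac{1}{g(n)\log n}\right)^{k\log n}\le e^{-k/g(n)}$, so that $D_n \ge 1 - e^{-\mu_n}$ with $\mu_n := k/g(n)$. Hence $G_{MC}\le \frac{\mu_n}{1-e^{-\mu_n}}$. The map $\mu\mapsto \mu/(1-e^{-\mu})$ is increasing on $(0,\infty)$ with finite limit $1$ as $\mu\to0^+$, so it suffices to observe that $\mu_n$ is bounded: if $\lim g(n)=L\in(0,\infty)$ then $\mu_n\to k/L$, while if $g(n)\to\infty$ then $\mu_n\to0$; in either case $\sup_n\mu_n =: M <\infty$, whence $G_{MC}\le \frac{M}{1-e^{-M}}$, a constant.

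The routine steps are just the elementary inequalities $1-x\le e^{-x}$ and the monotonicity of $\mu/(1-e^{-\mu})$; the only point requiring care — and where I expect to spend the most effort — is controlling $D_n$ uniformly in $n$ across the transition region, i.e. confirming that $D_n$ behaves like $\min\{1,\mu_n\}$ up to constants rather than degenerating. For the constant-limit subcase one can sharpen this by invoking the standard limit $\left(1-\frac{1}{g(n)\log n}\right)^{k\log n}\to e^{-k/L}$ (valid because $\frac{1}{g(n)\log n}\to0$ while its product with $k\log n$ tends to $k/L$), which pins the exact value $\frac{k/L}{1-e^{-k/L}}$; but the monotonicity-plus-boundedness argument already yields the claimed ``at most a constant,'' which is all the lemma asserts.
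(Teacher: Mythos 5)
Your proof is correct, and although it ultimately analyzes the same one-parameter quantity as the paper---both arguments reduce $G_{MC}$ to the behavior of $\mu_n/\bigl(1-e^{-\mu_n}\bigr)$ with $\mu_n = c_1/g(n)$---your route to and through that reduction is genuinely different, and in fact more robust. The paper passes to the limit \emph{inside} the expression, invoking the identity $\lim_{n\rightarrow\infty}(1+z/n)^n=e^z$ to replace $\bigl(1-\tfrac{1}{g(n)\log n}\bigr)^{c_1\log n}$ by $\exp\bigl(-c_1/g(n)\bigr)$, and then for part 2 applies the Taylor bound $e^{-x}\le 1-x+\tfrac{1}{2}x^2$ to get $G_{MC}\le \bigl(1-\tfrac{1}{2}c_1/g(n)\bigr)^{-1}$. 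You instead work entirely with one-sided inequalities: for part 1 the trivial bound $D_n\le 1$ yields $G_{MC}\ge k/g(n)\rightarrow\infty$ with no limit interchange at all, and for part 2 the inequality $1-x\le e^{-x}$ plus monotonicity of $\mu\mapsto\mu/(1-e^{-\mu})$ gives a uniform constant bound. This buys you two things. First, the paper's limit identity is stated for fixed $z$ but applied with $z=-c_1/g(n)$ varying with $n$, which strictly speaking requires a supplementary argument your inequalities sidestep. Second, the paper's Taylor step silently requires $c_1/g(n)<2$ eventually: if $\lim_{n\rightarrow\infty} g(n)=L$ with $0<L\le c_1/2$, the quantity $\tfrac{c_1}{g(n)}-\tfrac{1}{2}\tfrac{c_1^2}{g^2(n)}$ is nonpositive, and dividing by it reverses (or invalidates) the inequality, so the paper's displayed chain breaks down in that regime even though the conclusion is true; your monotonicity argument covers every $L\in(0,\infty]$ uniformly. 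One small point of hygiene in your write-up: justify $\sup_n\mu_n<\infty$ by noting that $\mu_n=k/g(n)$ converges (to $k/L$ or to $0$) and convergent sequences are bounded, which is what your argument implicitly uses.
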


\begin{proof}
Letting $c_1 = \frac{1}{4}(1-2\epsilon)c^2$:
\begin{eqnarray}
\lim_{n\rightarrow\infty}G_{MC} &=& \lim_{n\rightarrow\infty}  c_1 \log{n} \frac{1-\gamma_n}{1 - \gamma^{c_1\log{n}}_n} \nonumber\\
    &=& \lim_{n\rightarrow\infty} \frac{\frac{c_1 }{g(n)}}{1 - \left(1-\frac{1}{g(n)\log(n)}\right)^{c_1\log{n}}}\nonumber\\
    \label{explimit}
    &=& \lim_{n\rightarrow\infty} \frac{\frac{c_1}{g(n)}}{1 - \exp\left(-\frac{c_1}{g(n)}\right)}
\end{eqnarray}
Equation~(\ref{explimit}) uses the identity $\lim_{n\rightarrow\infty}\left(1+\frac{z}{n}\right)^n = e^{z}$.\\
Hence, if $\lim_{n\rightarrow\infty}g(n) = 0$, then $\lim_{n\rightarrow\infty}G_{MC} = \infty$.\\
Otherwise if $g(n)$ satisfies condition 2) above, then using the second order Taylor series approximation $e^{-x} \leq 1 - x + \frac{1}{2!}x^2$ for $x\geq0$, we obtain:
\begin{eqnarray}
\lim_{n\rightarrow\infty}G_{MC}
    &\leq& \lim_{n\rightarrow\infty} \frac{\frac{c_1}{g(n)}}{\frac{c_1}{g(n)}-\frac{1}{2!}\frac{c_1^2}{g^2(n)}}\\
    &=& \lim_{n\rightarrow\infty} \frac{1}{1-\frac{1}{2!}\frac{c_1}{g(n)}}\\
    &=& \textrm{constant}
\end{eqnarray}
\end{proof}

It might seem surprising at first that relaxing the broadcast constraint did not enhance the throughput (lower bound) by more than a constant when $\gamma_n = 1-\frac{1}{g(n)\log{n}}$ and $g(n)$ satisfies condition 2) above, but this result illustrates the robustness factor of the broadcast network when the erasure probability is very high. This aligns with our intuition. Since each node is connected to at most a constant fraction of $\log{n}$ and the success probability $1-\gamma_n$ is equal to $\frac{1}{g(n)\log{n}}$, the expected number of successful transmissions for a node is proportional to $\log{n}\times \frac{1}{g(n)\log{n}}$, which is at most a constant. Hence, as argued earlier, at most a constant improvement gain can be achieved by multi-cast in this case.

\section{Random Erasure Probabilities}
So far, all erasure probabilities have been assumed fixed and equal across all links in the network. This is unlikely to be true in a real wireless network due to fading, interference and node geometry. Even if the erasure probabilities were close to being fixed, it might be unpractical to characterize each link separately, especially in large networks, because there are many links. We suggest modeling erasure probabilities themselves as random variables and explore the impact of this additional uncertainty on the performance of the network. Assigning random erasure probabilities that tend to increase as the number of nodes increases, can partially account for fading and interference. A similar approach was adopted in~\cite{Smi07},~\cite{Smi06} whereby the non-erasure probabilities, i.e. success probabilities decay polynomially with distance.\\

Intuitively, since each node is connected to a multiple of $nr_n^2$ other nodes, which is of the order of at least $\log{n}$ for $r_n \geq c \sqrt{\frac{\log{n}}{n}}$, we expect that by the law of large numbers, a suitable average of the erasure probabilities across those outgoing links is what matters. The following lemma formally characterizes a lower bound that is analogous to the one derived in Theorem~\ref{thm:mainthm} when the erasure probabilities are random.

\begin{lemma}
\label{lemma:randomearsures}
If the erasure probabilities $\gamma_{ij}$ are identically distributed, pairwise independent random variables with the same distribution as $\gamma$, such that $\log{\gamma}$ has finite mean and variance then the following is a lower bound on the broadcast capacity cut with high probability\footnote{That is probability goes to 1 as $n$ goes to $\infty$}:
$$C_{BC,var} \geq \frac{3}{2} \sqrt{\alpha-2\epsilon}
\frac{1-2\epsilon}{\sqrt{1+\epsilon}}nr_n \left( 1 -  \exp\left(m\left(E\log{\gamma}+\epsilon \right)\right)\right)$$
where $\alpha = \min\{\alpha_1, \alpha_2, 1-\alpha_1, 1-\alpha_2\}, m=(1-2\epsilon)\frac{1}{4}n r_n^2$ for any $\epsilon\in(0,\frac{1}{5})$.
\end{lemma}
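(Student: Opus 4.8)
The plan is to run the proof of Theorem~\ref{thm:capcut} unchanged up to the point where each boundary node contributes a factor to the cut-set sum~(\ref{eqn:capacity}), and to replace only that factor. The purely geometric part --- that $\epsilon$-niceness holds almost surely (Theorem~\ref{thm:epsilonnice}), that the grid inequality of Lemma~\ref{eqn:isoperimetry} forces $|\mathcal{V}_\mathcal{S}^*|\ge \frac{3}{2}\sqrt{\alpha-2\epsilon}\frac{1-2\epsilon}{\sqrt{1+\epsilon}}nr_n$, and that every $u\in\mathcal{V}_\mathcal{S}^*$ has at least $m=(1-2\epsilon)\frac14 nr_n^2$ outgoing edges in the cut-set --- does not involve the erasure model and carries over verbatim, including the reduction of Theorem~\ref{thm:mainthm} that yields the prefactor with $\alpha=\min\{\alpha_1,\alpha_2,1-\alpha_1,1-\alpha_2\}$.

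For a fixed node $u\in\mathcal{V}_\mathcal{S}^*$ I would write $\prod_v\gamma_{uv}=\exp\bigl(\sum_v\log\gamma_{uv}\bigr)$ and call $u$ \emph{good} if $\frac{1}{d_u}\sum_v\log\gamma_{uv}\le E\log\gamma+\epsilon$. On the good event, and restricting to the only nontrivial regime $E\log\gamma+\epsilon<0$ (otherwise the claimed bound is vacuous since capacity is nonnegative), the fact that $d_u\ge m$ multiplied by the negative bracket gives $\sum_v\log\gamma_{uv}\le d_u(E\log\gamma+\epsilon)\le m(E\log\gamma+\epsilon)$, hence $1-\prod_v\gamma_{uv}\ge 1-\exp\bigl(m(E\log\gamma+\epsilon)\bigr)$. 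The role of the $+\epsilon$ slack is exactly to absorb the deviation of the empirical mean of $\log\gamma_{uv}$ from $E\log\gamma$; note that the geometric mean $\exp(E\log\gamma)$ here sits below the arithmetic mean $E\gamma$ by Jensen's inequality, which is the analytic source of the claim that variability helps.

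The step I expect to be the main obstacle is upgrading the single-node behaviour to something uniform over the whole boundary set, because the hypotheses grant only pairwise independence of the $\gamma_{uv}$ and a finite variance of $\log\gamma$ --- too little for any exponential tail bound, so a union bound over the $\Theta(nr_n)$ nodes of $\mathcal{V}_\mathcal{S}^*$ will not close. I would circumvent this with a first-moment argument on the number of bad nodes rather than a union bound. Pairwise independence makes the variance of $\sum_v\log\gamma_{uv}$ equal to $d_u\,\mathrm{Var}(\log\gamma)$, so Chebyshev gives $P(u\text{ bad})\le \mathrm{Var}(\log\gamma)/(m\epsilon^2)$; letting $B$ count the bad nodes in $\mathcal{V}_\mathcal{S}^*$, linearity of expectation yields $E[B]\le |\mathcal{V}_\mathcal{S}^*|\,\mathrm{Var}(\log\gamma)/(m\epsilon^2)$, so the \emph{expected fraction} of bad nodes is at most $\mathrm{Var}(\log\gamma)/(m\epsilon^2)$, which tends to $0$ because $m=(1-2\epsilon)\frac14 nr_n^2\to\infty$. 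By Markov's inequality the bad fraction is therefore $o(1)$ with probability tending to one.

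Finally I would combine the pieces: discarding the $o(1)$ fraction of bad nodes (each of which still contributes a nonnegative amount) and bounding every good node below by $1-\exp(m(E\log\gamma+\epsilon))$ gives, on the intersection of the almost-sure $\epsilon$-niceness event and the high-probability small-bad-fraction event, $C(\mathcal{V}_\mathcal{S})\ge (1-o(1))\,|\mathcal{V}_\mathcal{S}^*|\,\bigl(1-\exp(m(E\log\gamma+\epsilon))\bigr)$, which reproduces the stated bound after absorbing the vanishing fraction into the constants. A minor point to verify along the way is that the same $\epsilon$ can serve simultaneously as the niceness parameter, the cell-count slack, and the concentration slack; since all three enter only through fixed constants this is harmless, though one could introduce separate parameters and optimize at the end.
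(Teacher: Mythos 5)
Your proposal is correct, and its core concentration step is exactly the paper's: write the product of erasure probabilities as the exponential of a sum of $\log\gamma_{uv}$'s, use pairwise independence so that the empirical mean $\frac{1}{m}\sum_{i}\log\gamma^{(i)}$ has variance $\textrm{var}(\log\gamma)/m$, apply Chebyshev to conclude it is at most $E\log\gamma+\epsilon$ with probability tending to one, and exponentiate to obtain the $\exp\left(m\left(E\log\gamma+\epsilon\right)\right)$ term. Where you genuinely diverge is in what this estimate is applied to. The paper's proof treats the bound of Theorem~\ref{thm:mainthm} as if it contained a \emph{single} product of $m$ random erasure probabilities, applies Chebyshev once to that product, and stops; it never confronts the fact that the cut-set expression~(\ref{eqn:capacity}) is a sum over the boundary set, with each node carrying its own product. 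You identify precisely this as the obstacle, observe correctly that the polynomial Chebyshev tail cannot survive a union bound over the $\Theta(nr_n)$ boundary nodes, and close the gap with a first-moment argument: the expected fraction of ``bad'' nodes is at most $\textrm{var}(\log\gamma)/(m\epsilon^2)=o(1)$, so by Markov the bad fraction vanishes in probability, and discarding bad nodes costs a $(1-o(1))$ factor that is legitimately absorbed via the slack in $\epsilon$ (the bound is monotone in $\epsilon$, so proving it for a smaller parameter dominates the stated bound for large $n$). In this respect your argument is strictly more complete than the paper's own proof. Two caveats apply equally to both arguments: the geometric scaffolding imported from Theorem~\ref{thm:capcut} is not entirely erasure-free (the recoloring step in Lemma~\ref{lemma:app} uses concavity of $w\mapsto w(1-\gamma^{b'})+w'(1-\gamma^{t-w+c_1})$, which presumes equal deterministic erasures), and neither proof addresses uniformity over the exponentially many cuts defining the capacity --- your bad-node set depends on the cut through which edges are counted, so Chebyshev-type tails cannot survive that union bound either. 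Since the paper accepts both simplifications, your proposal meets its standard of rigor and exceeds it on the node-uniformity point.
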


\begin{proof}
The cut set capacity bound proved in Theorem~\ref{thm:mainthm} can now be thought of as a random variable.
Ordering the $(1-2\epsilon)\frac{1}{4}n r_n^2$ terms of the $\gamma_{uv}$'s appearing in the result of Theorem~\ref{thm:mainthm} arbitrarily as $\gamma^{(1)}, \gamma^{(2)},\ldots \gamma^{(m)}$, where where $m=(1-2\epsilon)\frac{1}{4}n r_n^2\geq (1-2\epsilon)\frac{1}{4}c^2\log{n}$. We can express this random variable as:
$$\frac{3}{2} \min\left\{\sqrt{\alpha-2\epsilon},\sqrt{1-\alpha-2\epsilon}\right\}
\frac{1-2\epsilon}{\sqrt{1+\epsilon}}nr_n \left( 1 - \prod_{i=1}^{m} \gamma^{(i)}\right)$$  or equivalently as:
$$\frac{3}{2} \min\left\{\sqrt{\alpha-2\epsilon},\sqrt{1-\alpha-2\epsilon}\right\}
\frac{1-2\epsilon}{\sqrt{1+\epsilon}}nr_n \left( 1 - \exp\left\{\sum_{i=1}^{m} \log{\gamma^{(i)}}\right\}\right)$$
\begin{eqnarray*}
\Pr\left\{\frac{1}{m}\sum_{i=1}^{m} \log{\gamma^{(i)}} > \left(E\log{\gamma^{(1)}}+\epsilon \right)\right\} &\leq& \Pr\left\{\left|\left(\frac{1}{m}\sum_{i=1}^{m}\log{\gamma^{(i)}}\right)  - E\log{\gamma^{(1)}}\right| > \epsilon  \right\} \\
&\leq& \frac{\textrm{var}\left(\frac{1}{m}\sum_{i=1}^{m}\log{\gamma^{(i)}}\right)}{\epsilon^2}\\
&=& \frac{\textrm{var}\left( \log{\gamma^{(1)}} \right)}{m \epsilon^2}\\
&\rightarrow& 0 \textrm{ as $n\rightarrow\infty$}.
\end{eqnarray*}
Therefore:
\begin{eqnarray*}
\Pr\left\{\exp\left(\sum_{i=1}^{m} \log{\gamma^{(i)}}\right) > \exp\left(m\left(E\log{\gamma^{(1)}}+\epsilon \right)\right)\right\} &=&
\Pr\left\{
\prod_{i=1}^{m} \gamma^{(i)} > \exp\left(m\left(E\log{\gamma^{(1)}}+\epsilon \right)\right)\right\}\\
&\rightarrow& 0 \textrm{ as $n\rightarrow\infty$}
\end{eqnarray*}
\end{proof}
Notice that by concavity of the logarithm function and by Jensen's inequality, $E\log{\gamma^{(1)}} \leq \log{E{\gamma^{(1)}}}$. Hence, by comparing the expressions of Lemma~\ref{lemma:randomearsures} to that of Theorem~\ref{thm:mainthm}, it follows that there is actually a gain in the case where $\gamma^{(1)}$ is a random variable with distribution same as $\gamma$ over that where $\gamma^{(1)}$ is fixed and is equal to the mean $E\gamma$. By comparing the lower bounds, we notice a gain of
\begin{equation}
\label{eqn:gainvar}
G_{var} \triangleq \frac{1-\exp{\left(m(E\log\gamma+\epsilon)\right)}}{1-(E\gamma)^m}
\end{equation}
 due to the variability of the erasure probabilities where $m=(1-2\epsilon)\frac{1}{4}nr_n^2$.
We will demonstrate an example to illustrate this gain due to the variability in erasure probabilities. We will consider two cases. The first case is that of a fixed erasure probability $\gamma_1 = 0.5$. In the second case, $\gamma_2$ is a random variable that is uniform over $[0.25,0.75]$. In the latter case, $$E\log{\gamma_2} = \frac{1}{0.5}\left(\left(0.75\log{0.75} - 0.75\right) - \left(0.25\log{0.25} - 0.25\right)\right) \approx -0.7384$$
Since $\epsilon$ can be made arbitrarily small, say $\epsilon=0.01$. Then, the gain is about $\frac{1-0.4827^m}{1-0.5^m}$, which is greater than one.\\

This result might be surprising because variability and unequal factors usually yield a loss. For example, the capacity of an additive white gaussian channel is proportional to $\log{(1+SNR)}$ where $SNR$ is the signal to noise ratio. For a fixed noise level and a fixed transmit average power, varying the transmit power yields a loss in this case. Alternatively, for a fixed transmit power, varying the noise level yields a loss. It follows by the concavity of the function $f(x)=\log{(1+x)}$ and Jensen's that $$E\left[{\log{(1+SNR)}}\right] \leq \log{(1+E{(SNR)})}$$ and this shows that variability cannot yield any gain in this case. So practically, if a system is operating at a certain power level, then decreasing $SNR$ decreases channel capacity significantly but increasing $SNR$ by the same amount yields a smaller gain. The situation is different in the case of variability of erasures in a broadcast wireless networks since only one successful transmission across the cut is sufficient to ``transport'' the bit from one side of the cut to the other. More successful transmissions do not increase the capacity of that particular cut. We conclude that variability in erasure probabilities provides statistical diversity that improves the chances of at least one successful transmission.

\section{Conclusion}
\label{sec:conclusion}
We analyzed the cut-set capacities in the canonical framework of $n$ nodes uniformly
and independently distributed in the unit square whereby nodes are connected to nearby nodes that lie
within their transmission radius. The core of the analysis is based on random geometric graph theory and
its analogy with that of the deterministic grid. A lower bound on the end-to-end throughput in the case of
arbitrary multiple sources and multiple destinations was presented and a scaling law of $n r_n$ was observed in the case where the broadcast constraint is enforced.
This lower bound agrees with the $\sqrt{n}$ scaling law presented in~\cite{GupKum00}
when the transmission radius $r_n$ scales as $\sqrt{\frac{\log{n}}{n}}$ although the models are different.
The lower bound derived in this paper reflects the effect of physical layer parameters, such as erasure probabilities
and transmission radii of the nodes.
We investigated relaxing the broadcast constraint and proved a lower bound that scales as $n^2 r_n^3$ when nodes are allowed to send distinct messages across their outgoing links, assuming constant erasure probabilities\footnote{That is, not a function of $n$}. We also concluded that multicast allows a significant gain in capacity only when the expected number of successful transmissions is large. Although we did not explicitly deal with interference and fading, we allowed the erasure probability $\gamma$ to be a function of the number of nodes $n$. Hence, interference can partially be accounted for by modeling the erasure probability as an increasing function with $n$, say as $1-\frac{1}{\log{n}}$. Similarly, fading can be accounted for by assuming that the erasure probabilities are random variables. We finally showed somewhat surprisingly that this variability can actually boost the end-to-end throughput for large networks.

\appendix
{\bf Proof of Theorem~\ref{thm:capcut}}
\begin{proof}
Note that given $\gamma_{uv} = \gamma$ for all $(u,v)\in\mathcal{E}$, the capacity expression simplifies to:
$$C(\mathcal{V}_\mathcal{S}) = \sum_{u\in\mathcal{V}_\mathcal{S}^*}\left(1-\gamma^{|N_O(u)|}\right)$$
Also, given the assumptions on $\mathcal{V}_n$ and $r_n$, it follows that $\mathcal{G}_n$ is $\epsilon$-nice almost surely by theorem~\ref{thm:epsilonnice}.\\
We will neglect ceiling and floors for simplicity. Dissect the unit area square into $4/r_n^2$ smaller square cells. We color the nodes in $\mathcal{V}_\mathcal{S}$ white and those in $\mathcal{V}_\mathcal{D}$ black. We also color the cells as follows. We color each cell black if it contains at most $\frac{1}{5}\epsilon n r_n^2$ white points, white if it contains at most $\frac{1}{5}\epsilon n r_n^2$ black points and grey otherwise. This can be thought of as clumping all the nodes in each cell to one super-node, having the same color as that assigned to the cell. Less formally, we color the super node the color of the majority of the points it represents. Hence black and white cells denote cells with mostly black and white points respectively. Grey cells are ``mixed'' and have many points from both colors.
We consider two cases, depending on the number of grey cells, $G_n$.
The following two lemmas formally prove that in both cases, the lower bound stated in the theorem is valid.
Intuitively, if there are many grey cells, then there must be many edges in the cut-set due to the edges between white and black points in each grey cell and these edges will be enough to establish the lower bound. On the other hand, if there were very few grey cells, we will recolor the grey cells pessimistically, and apply the grid inequality established in section~\ref{gridinequalities} on the super nodes.
\end{proof}

\begin{lemma}
If $G_n \geq \frac{25}{\epsilon r_n}$, then $C(\mathcal{V}_\mathcal{S})\geq 5 n r_n \left( 1 - \gamma^{\frac{1}{5}\epsilon n r_n^2} \right)$
\end{lemma}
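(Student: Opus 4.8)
The plan is to exploit the fact that a \emph{grey} cell, by its very definition, is a dense bipartite cluster of mutually connected white and black nodes, and that the cut edges living inside the grey cells alone already suffice to establish the claimed bound. Concretely, recall that under the uniform erasure assumption the capacity cut reduces to $C(\mathcal{V}_\mathcal{S}) = \sum_{u\in\mathcal{V}_\mathcal{S}^*}\left(1-\gamma^{d(u)}\right)$, where $d(u)$ denotes the number of outgoing edges of the white node $u$ that land on black nodes, i.e.\ that cross the cut. Since every term of this sum is nonnegative, any subcollection of white nodes gives a valid lower bound.

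The first and central step is a purely geometric observation. Because the unit square is dissected into $4/r_n^2$ cells, each cell is a square of side length $r_n/2$, so any two nodes lying in the same cell are at $L_\infty$ distance at most $r_n/2 < r_n$ and are therefore connected. Consequently, inside any grey cell every white node is joined to every black node of that same cell. By the definition of a grey cell it contains more than $\frac{1}{5}\epsilon n r_n^2$ white nodes and more than $\frac{1}{5}\epsilon n r_n^2$ black nodes; hence each white node $u$ in a grey cell already has $d(u) > \frac{1}{5}\epsilon n r_n^2$ from its own cell, and in particular $u\in\mathcal{V}_\mathcal{S}^*$.

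Next I would bound each retained term and then count. Since $1-\gamma^x$ is nondecreasing in $x$ for $\gamma\in[0,1)$, each white node $u$ in a grey cell contributes at least $1-\gamma^{\frac{1}{5}\epsilon n r_n^2}$ to $C(\mathcal{V}_\mathcal{S})$. Across all $G_n$ grey cells there are at least $G_n\cdot\frac{1}{5}\epsilon n r_n^2$ such white nodes, so discarding the remaining nonnegative terms gives
$$C(\mathcal{V}_\mathcal{S}) \;\geq\; G_n\,\frac{1}{5}\epsilon n r_n^2\left(1-\gamma^{\frac{1}{5}\epsilon n r_n^2}\right).$$
Substituting the hypothesis $G_n\geq \frac{25}{\epsilon r_n}$ collapses the preconstant, $\frac{25}{\epsilon r_n}\cdot\frac{1}{5}\epsilon n r_n^2 = 5 n r_n$, which yields the stated bound $C(\mathcal{V}_\mathcal{S})\geq 5 n r_n\left(1-\gamma^{\frac{1}{5}\epsilon n r_n^2}\right)$.

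I do not expect a serious obstacle in this regime; the argument is short precisely because the grey-cell definition has already furnished, inside each grey cell, a large fully connected white--black bipartite subgraph. The only points worth verifying carefully are that the cell side length is at most $r_n$ so intra-cell connectivity is automatic, and that monotonicity of $1-\gamma^x$ lets the uniform lower bound $d(u)\geq\frac{1}{5}\epsilon n r_n^2$ be applied termwise. Notably, this case needs neither $\epsilon$-niceness nor any inter-cell edges; the real difficulty is deferred to the complementary regime $G_n < \frac{25}{\epsilon r_n}$, where grey cells are too scarce to rely on and the grid isoperimetric inequality of Lemma~\ref{eqn:isoperimetry} must instead be brought to bear.
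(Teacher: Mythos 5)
Your proposal is correct and follows essentially the same route as the paper's own proof: restrict to the intra-cell cut edges of grey cells, use the side length $r_n/2$ to get full white--black connectivity within each cell, lower-bound each white node's contribution by $1-\gamma^{\frac{1}{5}\epsilon n r_n^2}$, and multiply by the count $G_n\cdot\frac{1}{5}\epsilon n r_n^2$ before substituting the hypothesis on $G_n$. Your added remarks (monotonicity of $1-\gamma^x$, nonnegativity of discarded terms, and that $\epsilon$-niceness is not needed in this regime) only make explicit what the paper leaves implicit.
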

\begin{proof}
By the definition of a grey cell, each grey cell contains at least $\frac{1}{5}\epsilon n r_n^2$ points in $\mathcal{V}_\mathcal{S}$ and $\frac{1}{5}\epsilon n r_n^2$ points in $\mathcal{V}_\mathcal{D}$. Nodes in the same cell are certainly connected because the side length of the cell is $r_n/2$. Therefore, each grey cell contains at least $\frac{1}{5}\epsilon n r_n^2$ points in $\mathcal{V}_\mathcal{S}^*$, each of which has an out-degree of at least $\frac{1}{5}\epsilon n r_n^2$. Considering only these edges within grey cells, each grey cell contributes at least $\frac{1}{5}\epsilon n r_n^2 \left( 1 - \gamma^{\frac{1}{5}\epsilon n r_n^2} \right)$ to $C(\mathcal{V}_\mathcal{S})$. Hence, $C(\mathcal{V}_\mathcal{S}) \geq  G_n \frac{1}{5}\epsilon n r_n^2 \left( 1 - \gamma^{\frac{1}{5}\epsilon n r_n^2} \right)$ yielding the lemma.
\end{proof}

\begin{lemma}
\label{lemma:app}
If $G_n < \frac{25}{\epsilon r_n}$, then $$C(\mathcal{V}_\mathcal{S}) \geq \frac{3}{2} \min\left\{\sqrt{\alpha-2\epsilon},\sqrt{1-\alpha-2\epsilon}\right\}
\frac{1-2\epsilon}{\sqrt{1+\epsilon}}nr_n \left( 1 - \gamma^{(1-2\epsilon)\frac{1}{4}n r_n^2}\right)$$
\end{lemma}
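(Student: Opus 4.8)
The plan is to exploit that there are few grey cells to reduce the problem to a clean two-coloring of the $m \times m$ grid of cells (with $m = 2/r_n$, so that $m^2 = 4/r_n^2$), apply the grid inequality of Lemma~\ref{eqn:isoperimetry}, and then convert the resulting boundary length back into a guaranteed number of source-side nodes of large cut out-degree. First I would recolor each grey cell, assigning it arbitrarily to white ($\mathcal{A}$) or black ($\mathcal{B}$); since $G_n < 25/(\epsilon r_n)$ and each cell holds at most $(1+\epsilon)\frac14 n r_n^2$ nodes, the total number of nodes whose color is ``corrupted'' by this recoloring is at most $G_n (1+\epsilon)\frac14 n r_n^2 = O(n r_n)$, which is negligible next to the $\alpha n$ source nodes as $n \to \infty$. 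This produces a genuine partition $(\mathcal{A},\mathcal{B})$ of $\{1,\dots,m\}^2$ to which Lemma~\ref{eqn:isoperimetry} applies.

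Next I would lower bound $|\mathcal{A}|$ and $|\mathcal{B}|$. Counting the source (white) nodes by the cell they lie in: white cells hold at most $(1+\epsilon)\frac14 n r_n^2$ of them each, genuinely black cells at most $\frac15 \epsilon n r_n^2$ each (by the black-cell threshold), and the few recolored grey cells contribute only the negligible $O(n r_n)$ term above. Since the total is $\alpha n$ and there are $m^2 = 4/r_n^2$ cells, summing these bounds and using $\epsilon$-niceness (Theorem~\ref{thm:epsilonnice}, Definition~\ref{def:epsnice}) yields, for $n$ large enough, $|\mathcal{A}| \geq \frac{\alpha - 2\epsilon}{1+\epsilon} m^2$, and symmetrically $|\mathcal{B}| \geq \frac{1 - \alpha - 2\epsilon}{1+\epsilon} m^2$. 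Feeding these into the grid inequality gives
\[ \partial_{\mathcal{A},\mathcal{B}} \;\geq\; 3 \min\{\sqrt{|\mathcal{A}|},\sqrt{|\mathcal{B}|}\} \;\geq\; \frac{6}{r_n}\,\frac{\min\{\sqrt{\alpha-2\epsilon},\sqrt{1-\alpha-2\epsilon}\}}{\sqrt{1+\epsilon}}. \]

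The final step translates this boundary into a bound on $C(\mathcal{V}_\mathcal{S}) = \sum_{u\in\mathcal{V}_\mathcal{S}^*}(1-\gamma^{|N_O(u)|})$. A white cell adjacent (including diagonally) to a genuinely black cell has all of its $\geq (1-2\epsilon)\frac14 n r_n^2$ white nodes in $\mathcal{V}_\mathcal{S}^*$, because the cell side is $r_n/2$ so every such white node is connected to all $\geq (1-2\epsilon)\frac14 n r_n^2$ black nodes of the adjacent cell; hence each contributes at least $1-\gamma^{(1-2\epsilon)\frac14 n r_n^2}$. Summing over the boundary white cells, counting each node once, and combining with the bound on $\partial_{\mathcal{A},\mathcal{B}}$ then yields a lower bound of the form claimed in the lemma.

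I expect this last translation to be the main obstacle, for three reasons. First, $\partial_{\mathcal{A},\mathcal{B}}$ counts adjacent cell \emph{pairs}, whereas $C(\mathcal{V}_\mathcal{S})$ sums over \emph{distinct} nodes, so I must pass from the edge boundary to the number of distinct boundary white cells without either double-counting nodes or losing the constant; this is where the precise preconstant $\frac32$ (and the power $(1-2\epsilon)$ rather than $(1-\epsilon)$ inside the exponent) is decided. Second, an adjacent black cell might itself be a recolored grey cell with few black nodes, so I must argue that the boundary pairs touching grey cells are a vanishing fraction of $\partial_{\mathcal{A},\mathcal{B}}$, again using $G_n r_n \to 0$. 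Third, boundary cells lying on the perimeter of the unit square have fewer neighbors, so these edge effects must be shown not to degrade the constant. All three are handled by absorbing $O(n r_n)$ corrections, which are of lower order than the $\Theta(n r_n (1-\gamma^{\cdots}))$ main term as $n \to \infty$, but keeping the bookkeeping tight enough to preserve the stated constants is the delicate part.
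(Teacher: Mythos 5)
Your proposal shares the paper's skeleton---recolor grey cells, count white/black cells via $\epsilon$-niceness, apply the grid inequality, translate the cell boundary into cut capacity---but the way you dispose of the grey cells is where the two proofs genuinely diverge, and your version has a gap there that breaks the argument. You recolor grey cells \emph{arbitrarily} and then claim the damage is negligible because boundary pairs touching grey cells are ``a vanishing fraction of $\partial_{\mathcal{A},\mathcal{B}}$, again using $G_n r_n \to 0$.'' But $G_n r_n \to 0$ is not available: the hypothesis gives only $G_n < 25/(\epsilon r_n)$, i.e.\ $G_n r_n < 25/\epsilon$, a (large) constant. The number of boundary pairs touching grey cells can therefore be as large as $8 G_n \approx 200/(\epsilon r_n)$, while the grid inequality guarantees a boundary of only about $\frac{6}{r_n}\sqrt{(\alpha-2\epsilon)/(1+\epsilon)} \leq 6/r_n$. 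Both are $\Theta(1/r_n)$, and the grey-touching pairs have the far larger admissible constant, so they can swallow the entire boundary. Concretely: let the grey cells form a one-cell-wide vertical strip ($G_n \approx 2/r_n$, comfortably below $25/(\epsilon r_n)$) separating genuinely white cells on the left from genuinely black cells on the right. Then \emph{every} boundary pair after recoloring involves a recolored grey cell; no genuinely white cell is adjacent to a genuinely black cell, so your translation step returns zero under an unlucky recoloring, and even under the lucky one (strip recolored white) the strip cells may contain as few as $\frac{1}{5}\epsilon n r_n^2$ genuinely white nodes---not the $(1-2\epsilon)\frac{1}{4} n r_n^2$ your translation asserts---so at best you recover an $\epsilon$-degraded constant, not the stated bound. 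The point is that ``$o(n)$ corrupted nodes'' suffices for the cell-counting step (your bounds on $|\mathcal{A}|,|\mathcal{B}|$ are fine and match the paper's $W_n, B_n$ bounds), but it says nothing about \emph{where} those nodes sit, and an adversarial cut puts them exactly along the min-cut boundary.

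The idea you are missing is the paper's \emph{pessimistic} recoloring, justified by concavity. Restrict attention to cut edges between nodes in distinct neighboring cells; for a cell with $t$ nodes of which $w$ are white, this restricted cut value has the form $w(1-\gamma^{b'}) + w'(1-\gamma^{t-w+c_1}) + c_2$, which is concave in $w$, so its minimum over $w\in[0,t]$ is attained at $w=0$ or $w=t$. Hence each grey cell can be recolored all-white or all-black \emph{without increasing} the restricted cut value, and after recoloring there is no error term to control at all: the purely two-colored configuration's restricted cut value lower-bounds the true $C(\mathcal{V}_\mathcal{S})$, every white cell (including recolored ones) legitimately counts all of its at least $(1-2\epsilon)\frac{1}{4} n r_n^2$ nodes as white in that configuration, and the grid inequality plus adjacency then give the stated constants. (Your approach could be salvaged without concavity by a case split: if at least half of $\partial_{\mathcal{A},\mathcal{B}}$ consists of pairs of genuine cells, argue as you do; otherwise $\Theta(1/r_n)$ \emph{distinct} grey cells touch the boundary and their internal white--black edges already contribute, as in the paper's companion lemma for many grey cells---but this provably degrades the preconstant to an $\epsilon$-dependent one and does not yield the lemma as stated.) Your first flagged obstacle (pairs versus distinct cells) is real but is one the paper itself glosses over: its final display multiplies $\delta\mathcal{G}$, a count of \emph{pairs}, by a per-white-cell contribution, which as written overcounts white cells adjacent to several black cells.
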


\begin{proof}

For this case, we will only consider the edges $(u,v)$ contributing to the capacity cut such that $u$ and $v$ belong to distinct but neighboring cells, including diagonals. We will first show that we can recolor all nodes in a cell to all black or all white without increasing the value of the capacity cut when restricted to edges between neighboring cells. To see that, consider a cell with $t$ nodes, $w$ of which are white and $(t-w)$ are black. Assume its neighboring cells has $w'$ white nodes and $b'$ black nodes. Then the capacity cut, restricted to the edges described above, has the form $$w(1-\gamma^{b'})+w'(1-\gamma^{t-w+c_1}) + c_2$$ where $c_1$ and $c_2$ are constants independent of $w$ and $t$. With $w', b'$ and $t$ fixed, the above expression is a concave function in $w$. Indeed, the second derivative of this continuous function with respect to $w$ is $-w'\gamma^{t-w+c_1}$ which is non-positive. So its minimum over the compact set $w\in[0,t]$ is achieved at an extreme value. This proves that we can color all nodes of each square as all white or all black without increasing the capacity cut when restricted to edges crossing neighboring cells. We recolor nodes in each grey cell to all white or all black whichever does not increase the value of the capacity cut (restricted to edges crossing neighboring cells). We would have hence eliminated all grey cells. Since we recolored the nodes pessimistically, the lower bound that we establish now will still hold for the original case.\\

Let $W_n$ and $B_n$ denote the number of white cells and black cells respectively after recoloring. The number of points whose color has been changed is at most all the points in all grey cells, which can be bounded by $\epsilon n$ for sufficiently large $n$ as prove below:
 \begin{eqnarray}
 \textrm{All the points in all grey cells} &\leq& G_n (1+\epsilon)\frac{1}{4}nr_n^2 \label{allrecolor1}\\
            &\leq& \frac{25}{\epsilon r_n} (1+\epsilon)\frac{1}{4}nr_n^2 \label{allrecolor2}\\
            &=& \left(\frac{25}{4} \frac{1+\epsilon}{\epsilon}r_n \right)n \label{allrecolor3}\\
            &\leq& \epsilon n \textrm{ for $n$ large enough} \label{allrecolor4}
 \end{eqnarray}
 Equation (\ref{allrecolor1}) follows from $\epsilon$-niceness, (\ref{allrecolor2}) follows from the bound on $G_n$, and the last inequality (\ref{allrecolor4}) follows from the assumption that $\lim_{n\rightarrow \infty} r_n = 0$.\\

Before recoloring, the number of white points was exactly $\alpha n$ and at most $\epsilon n$ of them were recolored. Therefore, the number of white points after recoloring is at least $(\alpha - \epsilon)n$ for large $n$. By construction of black cells, the number of white points in black cells is at most $\frac{1}{5}\epsilon n r_n^2 4 \lceil 1/r_n\rceil^2 \leq \epsilon n$. Therefore, the number of white points in white cells after recoloring is at least $(\alpha-2\epsilon)n$. It follows by $\epsilon$-niceness then, that $W_n$ is at least $\frac{(\alpha-2\epsilon)n}{\frac{1}{4}(1+\epsilon)n r_n^2} = \frac{4(\alpha-2\epsilon)}{(1+\epsilon)r_n^2}$. Similarly, $B_n \geq \frac{4(1-\alpha-2\epsilon)}{(1+\epsilon)r_n^2}$.\\

Denote by $\delta \mathcal{G}$ the length of the boundary between white cells and black cells, that is, the number of distinct pairs of neighboring cells of opposite colors after recoloring. An application to the isoperimetric inequality stated in theorem~\ref{eqn:isoperimetry} yields:
\begin{equation}
\delta \mathcal{G} \geq 3 \min \left\{ \sqrt{W_n}, \sqrt{B_n} \right\} \geq \frac{6}{r_n} \min\left\{\sqrt{\frac{\alpha-2\epsilon}{1+\epsilon}},\sqrt{\frac{1-\alpha-2\epsilon}{1+\epsilon}}\right\}
\end{equation}

But each white cell contains at least $(1-2\epsilon)\frac{1}{4}nr_n^2$ white points and each black cell contains at least $(1-2\epsilon)\frac{1}{4}nr_n^2$ black points. Thus
\begin{equation}
C(\mathcal{V}_\mathcal{S}) \geq \delta \mathcal{G} (1-2\epsilon)\frac{1}{4}nr_n^2 \left( 1 - \gamma^{(1-2\epsilon)\frac{1}{4}n r_n^2}\right)
\end{equation}
Therefore,
\begin{equation}
C(\mathcal{V}_\mathcal{S}) \geq \frac{3}{2} \min\left\{\sqrt{\alpha-2\epsilon},\sqrt{1-\alpha-2\epsilon}\right\}
\frac{1-2\epsilon}{\sqrt{1+\epsilon}}nr_n \left( 1 - \gamma^{(1-2\epsilon)\frac{1}{4}n r_n^2}\right)
\end{equation}

\end{proof}

%\bibitem{mobility}
%M. Grossglauser and D. N. C. Tse, ``Mobility Increases the Capacity of
%Ad Hoc Wireless Networks,'' \emph{IEEE/ACM Trans. Netw.}, vol. 10, no. 4, pp.
%477--486, Aug. 2002.

\bibliography{Andrews}
\bibliographystyle{IEEEtran}

\end{document}